\newtheorem{thm}{Theorem}
\newtheorem{lem}[thm]{Lemma}
\newtheorem{clm}[thm]{Claim}
\newtheorem{conj}[thm]{Conjecture}
\numberwithin{thm}{section}
\newtheorem{observation}{Observation}
\newtheorem*{remark*}{Remark}
\newtheorem*{notation*}{Notation}
\newtheorem*{observation*}{Observation}
\newtheorem*{theorem*}{Theorem}
\newtheorem*{def*}{Definition}
\newcommand{\mbs}{\boldsymbol}
\renewcommand\@makefntext[1]{%
  \noindent\makebox[0em][r]{\@makefnmark}#1}
\begin{document}
\bibliographystyle{plainnat}

\begin{center}
\vspace{2in}
{\Large\bf One Dollar Each Eliminates Envy}\\ [.6cm]
\end{center}

\renewcommand*{\thefootnote}{\fnsymbol{footnote}}

\begin{center}
{\sc J. Brustle},
{\sc J. Dippel},
{\sc V.V. Narayan},
{\sc M. Suzuki}
\ and\ 
{\sc A. Vetta}
\footnote{\tt \{johannes.brustle,jack.dippel,vishnu.narayan,mashbat.suzuki\}@mail.mcgill.ca, adrian.vetta@mcgill.ca} \\[0.3cm]
McGill University
\\[0.3cm]
\today
\end{center}

\renewcommand*{\thefootnote}{\arabic{footnote}}
\addtocounter{footnote}{-1}


\begin{abstract}
    We study the fair division of a collection of $m$ indivisible goods amongst a set of $n$ agents. 
    Whilst envy-free allocations typically do not exist in the indivisible goods setting, 
    envy-freeness can be achieved if some amount of a divisible good ({\em money}) is introduced. Specifically, \citet{HS19}
  showed that, given additive valuation functions where the marginal value of each item is 
    at most one dollar for each agent, there always exists an envy-free allocation requiring a subsidy of at most $(n-1)\cdot m$ dollars. The 
    authors also conjectured that a subsidy of $n-1$ dollars is sufficient for additive valuations. We prove this conjecture. In fact, a subsidy of
    at most one dollar per agent is sufficient to guarantee the existence of an envy-free allocation. 
   Further, we prove that for general monotonic valuation functions an envy-free allocation always exists with a subsidy of at most $2(n-1)$ dollars
   per agent. In particular, the total subsidy required for monotonic valuations is independent of the number of items.
\end{abstract}

\section{Introduction} \label{s:introduction}
We consider the fair division of $m$ indivisible items amongst $n$ agents.
Specifically, we desire an allocation where no agent is envious of any other; that is, the value each agent has for its own allocated bundle is
at least as great as its value for the bundle of any other agent.
This concept, called {\em envy-freeness}, was introduced by \citet{Fol67}.
For divisible goods, \citet{Var74} explained how to obtain envy-free allocations using the theory of general equilibria: 
simply share each good equally amongst the agents and then find a competitive equilibrium. Thus envy-free allocations exist for classes of
valuation functions where competitive equilibria are guaranteed to exist.

Unfortunately for indivisible goods it is easy to see that envy-free allocations do not exist in general. 
For example, if the number of agents exceeds the number of items, then in every allocation 
there is an agent who receives an empty bundle.
In classical work, \citet{Mas87} asked if this impossibility result could be circumvented by the addition of a single 
divisible good, namely {\em money}. If so, how much money is needed to eradicate all envy? 
He considered the case of a market with $n$ agents and $m=n$ goods where each agent can be allocated {\em at most} one good, 
and has, without loss of generality, a value of at most one dollar for any specific good.
\citet{Mas87} then showed that an envy-free allocation exists with the addition of $n-1$ dollars into the market. 
But what happens in the general setting where the number of agents and number of goods may differ and where
agents may be allocated more than one good? The purpose of this paper is to understand this case of {\em multi-unit demand} valuations.
In this setting, \citet{HS19} proved that $m\cdot (n-1)$ dollars suffice to support an envy-free allocation when 
the agents have {\em additive} valuation functions. Further, they conjectured that, as with the unit-demand setting, 
there always exists an allocation for which $n-1$ dollars suffice.

The main result in this paper is the verification of this conjecture: for additive valuation functions, 
precisely $n-1$ dollars is sufficient to guarantee the existence 
of an envy-free allocation. In fact, our result is stronger in several ways. First,  
not only is the subsidy at most $n-1$ dollars in total but each agent receives {\em at most} one dollar in subsidy.
Secondly, this allocation is also \textit{envy-free up to one good (EF1)} -- this settles a second conjecture from \cite{HS19}. 
Thirdly, the allocation is {\em balanced}, that is, the cardinalities of the allocated bundles differ by at most one good.
Furthermore, this envy-free allocation can be constructed in polynomial time.

We also study the case of general valuation functions.
Requiring only the very mild assumption that the valuation functions are monotone,
we prove the perhaps surprising result that envy-free solutions still exist with a subsidy amount that
is {\em independent} of the number of goods $m$. Specifically, we prove that there is
an envy-free allocation where each agent receives a subsidy of at most $2(n-1)$ dollars, which is a 
total subsidy of $O(n^2)$.
Here the envy-free allocation can be constructed in polynomial time given a valuation oracle.

\subsection{Related Work}
\textit{Fair division} has been extensively studied over the past six decades. 
The concept of a fair allocation was formally introduced by \citet{Ste48} via the {\em cake-cutting problem}:
how can a heterogeneous cake be fairly divided among a set of agents? To address this question, it is necessary 
to first define fairness. The fairness objective of \citet{Ste48} was \textit{proportionality}. An allocation is proportional if every agent is allocated a 
bundle (or piece of cake) of value at least $\frac{1}{n}$ of its total value for the grand bundle (entire cake). 
For cake-cutting, and divisible goods in general, when the valuations are additive, envy-freeness implies 
proportionality. This is because, for any agent and 
any partition of the cake into $n$ pieces, some piece must be worth at least $\frac{1}{n}$ of the whole cake to that agent.
Thus envy-freeness is a stronger fairness guarantee than proportionality. 
Another classical fairness measure is {\em equitability}, where all agents should receive bundles
of the same value. In the case of divisible goods, \citet{Alon87} showed for additive continuous valuation functions that
allocations exist that satisfy proportionality, equitability and envy-freeness simultaneously.
Algorithmic methods to obtain envy-free cake divisions for any number of agents are also known; see, 
for example, \citet{BT95}.

More recently, \citet{Bud11} introduced the \textit{maximin share guarantee}
inspired by the cut-and-choose protocol. Assume an agent partitions the items into $n$ bundles and then receives  
the lowest-value bundle. The corresponding value that the agent obtains by selecting its optimal partition is called 
its {\em maximin} share. The fairness objective then is to find an allocation where every agent receives a bundle
of value at least its maximin share.

Unfortunately, for indivisible goods, there are examples where 
proportionality, envy-freeness, equitability and the maximin share guarantee
are all impossible to achieve. Consequently, there has been much focus on 
approximate fairness guarantees. One natural approach is the design of approximation algorithms 
for the maximin share problem. An alternative guarantee is via EF-$k$ allocations \cite{Bud11}, where
an agent has no envy provided $k$ goods are removed from the bundles of the other agents. Of special interest 
are the \textit{envy bounded by a single good}, or EF1 
allocations. \citet{LMM04} showed that, when the valuation functions are 
monotone, an EF1 allocation exists and it can be computed in polynomial time. A large body of recent work on 
the fair allocation of indivisible goods has focused on achieving these types of approximation
guarantee \cite{GHS18,KPW18,BCF19,CKM19}.

A parallel line of research considers the use of money in the fair allocation of indivisible goods. This is 
motivated by the \textit{rent division} problem, where the goal is to allocate $n$ indivisible goods among $n$ 
agents and divide a fixed total cost, $i.e.$ the {\em rent}, amongst the agents. \citet{Su99} showed that, under mild 
assumptions, \textit{rental harmony} can be achieved: there is an envy-free division of the goods and the rent. 
The majority of the literature in this area considers the setting with $n$ unit-demand agents, $m=n$ indivisible items 
and one divisible good, akin to {\em money}. \citet{Sve83}~showed that an envy-free and pareto efficient allocation exists
under certain conditions. \citet{TT93} study the structure of envy-free allocations of a single indivisible good when 
monetary compensations are possible. \citet{Mas87} studies a similar model to \cite{Sve83} under slightly different 
conditions; he showed that, with sufficient money, an envy-free allocation always exists. Specifically, 
his results imply that if the agents are unit-demand and their value for each item is at most one dollar, then 
a total of $n-1$ dollars suffices for envy-freeness. In \citet{Ara95} and \citet{Kli00}, the authors 
consider the same model and give polynomial-time algorithms to compute an envy-free allocation with subsidy.

Among the papers that consider a setting with more than $n$ items, most reduce to the above $n$-item case 
where at most one good is allocated to each agent. For example, \citet{ADG91} consider the more general $
m$-item setting and allow the possibility of 
undesirable objects, but their procedure introduces either ``null objects'' or ``fictitious people'' to equalize the 
number of agents and items before outputting an allocation with single-item bundles. \citet{HRS02} also consider the $m$-good case 
and provide a procedure to compute an envy-free allocation with side-payments, but their approach begins by bundling 
the goods into $n$ sets.

In recent work, \citet{HS19} extend the above models to the multi-demand setting with any number~$m$ of indivisible 
goods. Specifically, they consider the setting in which the $n$~agents have {\em additive} valuation functions over a set 
of $m$ items, and, without loss of generality, the value of each item is at most $1$. They characterize 
the \textit{envy-freeable} allocations in terms of the structure of the \textit{envy graph} (see Section~\ref{ss:envyfreeable}), 
whose nodes are the agents and 
whose arc weights represent the envies between pairs of agents. They then study the problem of minimizing the amount of 
subsidy that is sufficient to guarantee envy-freeness. It is easy to see this {\em minimum subsidy} can be at least $n-1$ for 
all envy-freeable allocations. Indeed, consider the case of a single item which each agent values at exactly one dollar;
evidently, every agent that does not receive the item must be compensated with a dollar. They present a matching 
upper bound of $n-1$ dollars for the special 
cases of {\em binary} and {\em identical} additive valuations. 

More generally, they prove that, for additive valuations, an envy-freeable allocation always exists if the total 
subsidy at least is $m\cdot(n-1)$ dollars.
But, based on the experimental analysis of over $100,000$ synthetic instances and over $3,000$ real-world instances 
of fair division, \citet{HS19} conjecture that this upper bound can be improved to $n-1$ dollars.
That is, for agents with additive valuations an envy-freeable allocation that requires a 
subsidy of at most $n-1$ always exists. In addition, they conjecture that an allocation exists that is 
both envy-freeable (with perhaps a much larger subsidy) 
{\em and} EF1 for the fair division problem with additive valuations.  
\begin{conj} \label{conj:hsconj1}
\cite{HS19} For additive valuations, there is an envy-freeable allocation that requires a total subsidy of at most $n-1$ dollars.
\end{conj}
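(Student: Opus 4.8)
The plan is to recast the entire problem as a statement about one weighted digraph. By the characterization recalled in Section~\ref{ss:envyfreeable}, an allocation $A=(A_1,\dots,A_n)$ is envy-freeable if and only if its envy graph $G_A$ — with an arc $(i,j)$ of weight $w_{ij}=v_i(A_j)-v_i(A_i)$ for every ordered pair — has no positive-weight directed cycle, and in that case the least subsidy needed by agent $i$ equals the maximum weight $\ell_i$ of a directed path of $G_A$ leaving $i$. Since the subgraph of strictly positive arcs is then acyclic it has a sink, so some agent has $\ell_i=0$; hence it suffices to produce an envy-freeable allocation in which $\ell_i\le 1$ for every $i$, and summing over the remaining at most $n-1$ agents gives the total bound of $n-1$ dollars (and simultaneously the stronger per-agent bound advertised in the abstract).

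The engine of the argument is a single inequality. Suppose $A$ maximizes utilitarian welfare $\sum_i v_i(A_i)$ among all balanced allocations. Rotating the bundles cyclically along any directed path $i_0\to i_1\to\cdots\to i_k$ (give $A_{i_{t+1}}$ to $i_t$, and $A_{i_0}$ to $i_k$) preserves the multiset of bundle sizes, so it is again a balanced allocation and cannot raise welfare; writing this out yields $\sum_{t=0}^{k-1} w_{i_t i_{t+1}} \le v_{i_k}(A_{i_k})-v_{i_k}(A_{i_0})$. When $m=n$ and the allocation is balanced every bundle is a single item, so the right-hand side is at most $v_{i_k}(A_{i_k})\le 1$; thus $\ell_i\le 1$ for all $i$, and the unit-demand (maximum-weight-matching) case is already settled with one dollar per agent.

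For multi-item bundles the same rotation only bounds $\ell_{i_0}$ by $v_{i_k}(A_{i_k})$, which can be as large as a whole bundle's worth, and this is not merely a slack bound: there are balanced welfare-maximizing allocations whose envy graph contains a long chain of small envies, so that the agent at the head of the chain provably needs strictly more than a dollar, while a different reshuffling of the very same bundles — equally balanced and equally welfare-optimal — is fully envy-free. Welfare therefore under-determines the allocation, and I would instead take $A$ to minimize the total subsidy $\sum_i\ell_i$ among all balanced welfare-maximizing allocations; each such $A$ is envy-freeable by the cyclic inequality and is swap-stable (no welfare-improving exchange of two items across bundles). The heart of the proof, and the step I expect to be the main obstacle, is to show that at this minimizer $\ell_i\le 1$ for every $i$. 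I would argue by contradiction from a longest path $i_0\to\cdots\to i_k$ of weight exceeding one: the cyclic inequality forces $v_{i_k}(A_{i_k})-v_{i_k}(A_{i_0})>1$, so the sink $i_k$ over-values its own bundle relative to the starved head $i_0$; using additivity and the fact that each item is worth at most a dollar, I would move the item of a downstream bundle that $i_0$ covets most toward $i_0$ and rebalance along the path, producing an allocation of the same welfare but strictly smaller total subsidy. The delicate point — where the bound genuinely differs from the monotone case — is to choose the transferred item so that the reshuffle stays welfare-neutral (a tight rotation or a swap of items the two agents value equally) and so that the owner's loss does not open a new long path elsewhere; this is exactly the interaction of swap-stability with the unit-value bound, and is the crux.

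Finally I would read off the remaining strengthenings. If $\ell_i\le 1$ for all $i$ then in particular every arc weight satisfies $w_{ij}\le 1$, i.e.\ each envy is covered by a single item, which is precisely EF1; the reshuffles keep all cardinalities within one of each other, so the output is balanced; and the construction is polynomial, since a balanced welfare-maximizing allocation is a maximum-weight degree-constrained assignment, the potentials $\ell_i$ are computable by longest-path computations in the positive-cycle-free graph $G_A$, and each subsidy-reducing step strictly decreases an integrally scaled potential and so is invoked only polynomially often.
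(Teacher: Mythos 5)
Your setup is sound: the reduction to showing $\ell_i\le 1$ for every agent in some envy-freeable allocation is exactly right, the cyclic inequality you derive from welfare-optimality of the bundle assignment is correct, and the $m=n$ (unit-bundle) case does follow immediately from it. But the argument stops precisely where the theorem begins. For multi-item bundles you propose to minimize total subsidy over balanced welfare-maximizing allocations and then, assuming some $\ell_{i_0}>1$, to ``move the item of a downstream bundle that $i_0$ covets most toward $i_0$ and rebalance along the path'' so as to strictly decrease total subsidy at equal welfare. You yourself flag this step as the crux and the main obstacle, and no construction is given: it is not shown that such a welfare-neutral transfer exists, that it does not create a positive cycle or a new heavy path elsewhere in the envy graph (moving one item out of a bundle perturbs the weights of all $n-1$ arcs into and out of that bundle, not just those along $P$), or that the process terminates --- your termination argument via an ``integrally scaled potential'' presupposes rational valuations, which is not part of the hypothesis. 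Note also that it is not even established that a welfare-maximizing \emph{balanced} allocation is envy-freeable: Theorem~\ref{thm:hsthm1}(b) requires optimality over all reassignments of the \emph{bundles}, which a balanced welfare maximizer does satisfy via your rotation argument, but the subsequent local moves change the bundles themselves and so must re-establish this property from scratch. As it stands the proposal is a plan for a proof, not a proof.

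It is also worth noting that the paper's route is genuinely different and avoids this local-search difficulty altogether. Rather than a global welfare-maximizing balanced allocation, it allocates items in $T$ rounds, each round being a maximum-weight matching of agents to the remaining items; this yields the key structural fact that $v_i(\mu_i^t)\ge v_i(j)$ for every item $j$ still unallocated after round $t$. The one-dollar bound then comes not from forcing every path weight below $1$ directly under $v$, but from a modified valuation profile $\bar v_i(\mu_k^t)=\max\bigl(v_i(\mu_k^t),\,v_i(\mu_i^{t+1})\bigr)$ under which (i) the same allocation remains envy-freeable, (ii) every arc weight is at least $-1$ by a telescoping sum, so closing any path into a cycle (Lemma~\ref{thm:one-dollar-less}) caps its weight at $1$, and (iii) arc weights, hence subsidies, under $\bar v$ dominate those under $v$. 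If you want to salvage your approach, the missing exchange argument is where all the work lies; otherwise you should switch to a construction, such as the round-by-round matching, whose structure supplies the needed inequalities for free.
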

\begin{conj} \label{conj:hsconj2}
\cite{HS19} For additive valuations, there is an envy-freeable allocation that is EF1.
\end{conj}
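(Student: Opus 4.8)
The plan is to work entirely through the envy-graph characterization of \citet{HS19}: an allocation is envy-freeable if and only if no reassignment of its bundles to the agents increases utilitarian welfare (equivalently, the weighted envy graph is acyclic), and in that case the minimum subsidy owed to agent $i$ equals the maximum weight of a directed path leaving $i$. Thus the total minimum subsidy is $\sum_i \ell_i$, where $\ell_i$ is this longest-path value. Both conjectures then reduce to exhibiting a single allocation that is (a) balanced, (b) EF1, and (c) such that every directed path in its envy graph has weight at most $1$. Property (c) forces $\ell_i\le 1$ for every $i$, and since a topologically last agent (a sink) has $\ell_i=0$ it forces a total subsidy of at most $n-1$; so (a)+(c) give Conjecture~\ref{conj:hsconj1} and (b) gives Conjecture~\ref{conj:hsconj2}, both at once.

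First I would fix the candidate allocation: take a \emph{balanced} allocation $A$ that admits no balance-preserving, welfare-improving move, where the moves allowed are reassignments of whole bundles among the agents and single-good cyclic exchanges (such a local optimum is reachable in polynomial time by repeatedly applying improving moves, in the spirit of envy-cycle elimination). Balance holds by construction. Envy-freeability is immediate: permuting the bundles among the agents leaves the multiset of bundle sizes unchanged, so every reassignment is again a balanced allocation and cannot beat $A$; by the characterization its envy graph is acyclic. EF1 should follow from the same optimality, since a persistent envy toward $A_j$ after deleting $i$'s most valuable good in $A_j$ would let one good be shifted from $A_j$ to $i$ (rebalanced by a compensating transfer) to raise welfare; as each good is worth at most a dollar this also bounds every single envy-graph arc by $1$.

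The heart of the argument is property (c). For a directed path $i_0\to i_1\to\cdots\to i_k$ write $w_t=v_{i_{t-1}}(A_{i_t})-v_{i_{t-1}}(A_{i_{t-1}})>0$ and $W=\sum_{t=1}^k w_t$. Rotating the bundles cyclically along the path---give $A_{i_{t+1}}$ to $i_t$ for $t<k$ and $A_{i_0}$ to $i_k$---is a permutation of the bundles, so it preserves balance, and its welfare change is
\[
\Delta \;=\; W \;+\; \bigl(v_{i_k}(A_{i_0})-v_{i_k}(A_{i_k})\bigr).
\]
Optimality gives $\Delta\le 0$, i.e. $W\le v_{i_k}(A_{i_k})-v_{i_k}(A_{i_0})\le v_{i_k}(A_{i_k})$. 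When every bundle is a single good (the unit-demand regime of \citet{Mas87}, with $m\le n$) the right-hand side is at most one dollar and we are done.

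The main obstacle is exactly that, once $m>n$, bundles carry several goods and the closing term $v_{i_k}(A_{i_k})$ can far exceed one dollar, so the crude bundle rotation no longer certifies $W\le 1$; indeed the longest path from $i_0$ ends at a \emph{sink}, an agent who values its own bundle most, so this term is genuinely large. To overcome this I would replace the whole-bundle rotation by a \emph{single-good} cyclic exchange along the path: pass one good $g_t\in A_{i_t}$ from $i_t$ to $i_{t-1}$ (and one good of $A_{i_0}$ to $i_k$), so that every agent again loses and gains exactly one good and balance is preserved exactly. Because the exchanged goods are drawn from the distinct bundles $A_{i_0},\dots,A_{i_k}$ they can be chosen independently, and by additivity each bundle's share of its envy can be charged to its most valuable transferable good; the task is to show the aggregate gain available from these exchanges is at least $W-1$, which with optimality forces $W\le 1$. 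I expect this charging---relating whole-bundle envies to best single goods exchangeable under balance---to be the crux. An alternative route I would pursue in parallel is an incremental construction inserting the goods one at a time, each time assigning a new good (worth at most a dollar) to an unenvied agent holding a smallest bundle and then re-optimizing, maintaining the invariant that every longest path stays at most one dollar, with additivity and balance supplying the extra slack that the weaker $2(n-1)$ bound for monotone valuations does not enjoy. Since all steps are balance-preserving welfare-improving moves the construction runs in polynomial time, and the resulting allocation simultaneously witnesses Conjecture~\ref{conj:hsconj1} and Conjecture~\ref{conj:hsconj2}.
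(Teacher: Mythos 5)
Your approach is genuinely different from the paper's, but it has a fatal gap at exactly the step you need for this statement: the claim that a balanced, welfare-locally-optimal allocation is EF1. It is false, even for the \emph{global} welfare maximum over balanced allocations (which is in particular a local optimum with respect to every balance-preserving move, including your whole-bundle reassignments and single-good cyclic exchanges). Take $n=2$ agents and six goods $\{a,b,c,d,e,f\}$ with $v_1=(1,1,1,\epsilon,\epsilon,\epsilon)$ and $v_2=(1,1,1,0,0,0)$. Among balanced allocations the welfare of giving agent $2$ exactly $k$ goods from $\{a,b,c\}$ is $3+k\epsilon$, so the unique welfare-maximizing balanced allocation is $A_2=\{a,b,c\}$, $A_1=\{d,e,f\}$; it is envy-freeable, but agent $1$ has $v_1(A_1)=3\epsilon$ and $v_1(A_2\setminus\{a\})=2$, so it is not EF1. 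Your proposed repair move --- shift one good of $A_2$ to agent $1$ and rebalance with a compensating transfer --- yields welfare $3+2\epsilon<3+3\epsilon$, i.e.\ it strictly \emph{decreases} welfare, so ``persistent envy beyond one good implies an improving move'' does not hold. Since the EF1 property is precisely the content of this conjecture, the argument does not go through. (Your secondary worry, the charging of a whole-bundle envy to a single exchangeable good in the path-rotation step, is also a real obstruction --- a bundle of ten goods each worth $0.5$ creates envy that no single good can account for --- but that step is only needed for the subsidy bound, not for this statement.)

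The paper avoids this trap by not taking a local optimum of welfare at all. Its Bounded-Subsidy Algorithm allocates the goods in $T$ rounds, computing in each round a maximum-weight matching between the agents and the goods still unallocated and giving each agent one good per round. Envy-freeability follows because any cycle weight in the final envy graph decomposes as a sum over rounds of per-round cycle weights, each non-positive by optimality of that round's matching (Lemma~\ref{lem:envyfreeable}). EF1 follows from a shifted comparison: agent $i$'s round-$t$ good is worth at least as much to $i$ as any good still available in round $t+1$, in particular agent $k$'s round-$(t+1)$ good, so summing over rounds gives $v_i(A_i)\ge v_i(A_k\setminus\{\mu_k^1\})$ (Lemma~\ref{lem:ef1}). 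In my counterexample this procedure would split $\{a,b,c\}$ between the two agents rather than concentrate them, which is exactly the behaviour your welfare-based local search cannot guarantee. If you want to salvage a local-search formulation, you would need moves and a potential function that enforce the round structure (or some other EF1 invariant) rather than raw utilitarian welfare over balanced allocations.
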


\subsection{Our Results} \label{ss:ourresults}

In this work we settle both Conjecture~\ref{conj:hsconj1} and Conjecture~\ref{conj:hsconj2}.
In fact, our main result is even stronger in a several ways. To wit, we show that, for any instance with additive valuations, there is an 
allocation that is simultaneously envy-freeable, EF1, balanced, \textit{and} requires a total subsidy of at most $n-1$ dollars. 
Moreover, we present an algorithm that computes such an allocation in polynomial time. Our bound not only 
applies to the total subsidy, but to each individual payment -- the payment made to each agent in this 
allocation is at most one dollar! 
Formally, in Sections~\ref{s:alg} and~\ref{s:bound} we prove the following theorem.
\begin{thm}\label{thm:main}
For additive valuations there is 
an envy-freeable allocation where the subsidy to each agent is at most one dollar.
(This allocation is also EF1, balanced, and can be computed in polynomial time.)
\end{thm}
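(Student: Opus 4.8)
The plan is to reduce the theorem to a purely structural statement about the \emph{envy graph} and then to construct, algorithmically, an allocation whose envy graph has the required structure. Recall from Section~\ref{ss:envyfreeable} that, for additive valuations, an allocation $A=(A_1,\dots,A_n)$ is envy-freeable if and only if it maximizes the utilitarian welfare $\sum_i v_i(A_{\sigma(i)})$ over all permutations $\sigma$ reassigning the bundles among the agents; equivalently, the weighted envy graph $G_A$, in which the arc $(i,k)$ carries weight $w_{ik}=v_i(A_k)-v_i(A_i)$, contains no positive-weight directed cycle. Moreover, the minimum subsidy that agent $i$ must receive equals the maximum weight $\ell_i$ of a directed path in $G_A$ leaving $i$. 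Thus Theorem~\ref{thm:main} is equivalent to producing a balanced allocation that is envy-freeable, is EF1, and satisfies $\ell_i \le 1$ for every agent $i$; the last condition is exactly what forces each per-agent subsidy to be at most one dollar.

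First I would dispose of the easy features. Since every item is worth at most one dollar, EF1 immediately bounds every individual arc weight: if $g\in A_k$ is the EF1-witness for the pair $(i,k)$, then $w_{ik}=v_i(A_k)-v_i(A_i)\le v_i(g)\le 1$. Hence, once the constructed allocation is EF1, each arc of $G_A$ has weight at most $1$, so a path consisting of a single arc already meets the bound. The entire difficulty, therefore, is to control the weight of \emph{long} directed paths, along which many small envies could in principle accumulate to a total exceeding one dollar.

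To build the allocation I would use an iterative, matching-based procedure that assigns items in balanced rounds: maintain a partial allocation together with a maximum-weight assignment of the current bundles to the agents (which keeps the allocation envy-freeable at every step by the characterization above), and in each round hand one additional item to each agent currently holding a smallest bundle, re-optimizing the assignment afterwards. This yields balance and envy-freeability for free, and EF1 can be preserved because each round adds only a single item, of value at most one, to any bundle. The substantive claim is an invariant: at every stage there exist certifying subsidies $p_i\in[0,1]$, that is, payments with $v_i(A_i)+p_i\ge v_i(A_k)+p_k$ for all $i,k$ and $0\le p_i\le 1$, which is precisely the statement that $\ell_i\le 1$.

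The main obstacle is showing that this invariant survives the addition of a new item and the re-optimization of the assignment. The clean telescoping argument available in the identical-valuations case---where $v_{i_t}(A_{i_{t+1}})=v_{i_{t+1}}(A_{i_{t+1}})$ collapses a path sum to $v(A_{i_\ell})-v(A_{i_0})$---fails once valuations differ across agents, so a different mechanism is needed. Here I would exploit welfare-maximality beyond the no-positive-cycle property: the optimal assignment forbids every improving exchange of bundles, which yields the pairwise inequalities $v_i(A_i)+v_k(A_k)\ge v_i(A_k)+v_k(A_i)$ and, more generally, constrains how the envies along a directed path can reinforce one another. Combining these exchange inequalities with the balanced round structure---so that any excess envy an agent can generate is charged against the at most one item by which its bundle trails a larger bundle---should bound the longest path from each agent by the value of a single item, hence by one dollar. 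Verifying that the re-matching step never produces a path that defeats this charging is the technical heart of the argument, and is where I expect the real work to lie.
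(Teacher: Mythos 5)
Your setup is sound and your algorithmic skeleton is essentially the paper's: allocate items in rounds via maximum-weight matchings, which yields balance, envy-freeability (no positive-weight cycle can arise because each round's matching is optimal), and EF1 (the item agent $i$ receives in round $t$ is worth at least as much to $i$ as any item still unallocated in round $t+1$). You also correctly reduce the theorem to showing $\ell_{G_\mathcal{A}}(i)\le 1$ for every agent. But the proposal stops exactly where the theorem starts: you never establish that a \emph{long} directed path in the envy graph has weight at most one. The pairwise exchange inequality $v_i(A_i)+v_k(A_k)\ge v_i(A_k)+v_k(A_i)$ only controls $2$-cycles, and the proposed ``charging against the at most one item by which a bundle trails'' is a hope rather than an argument; you acknowledge as much when you write that this is where you ``expect the real work to lie.'' As stated, nothing rules out a path $1\to 2\to\cdots\to r$ along which each arc carries a small positive envy and the total exceeds a dollar. (A secondary issue: you propose re-optimizing the assignment of bundles to agents after each round, whereas the paper's algorithm never permutes bundles; the round structure $A_i=\{\mu_i^1,\dots,\mu_i^T\}$ is exactly what the EF1 and telescoping arguments rely on.)

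The paper closes the gap with an idea absent from your sketch. Rather than upper-bounding arc weights, it \emph{lower}-bounds them: if an envy-freeable allocation has every arc weight at least $-1$, then closing a maximum-weight path $P$ into a cycle $C=P\cup(i,1)$ and using $w_{\mathcal{A}}(C)\le 0$ gives $\ell_{G_{\mathcal{A}}}(1)\le -w_{\mathcal{A}}(i,1)\le 1$ (Lemma~\ref{thm:one-dollar-less}). The output allocation need not satisfy this lower bound under $\mbs{v}$, so the paper introduces the modified profile $\bar v_i(\mu_k^t)=\max\left(v_i(\mu_k^t),\,v_i(\mu_i^{t+1})\right)$ for $k\ne i$, under which each arc weight telescopes to at least $v_i(\mu_i^T)-v_i(\mu_i^1)\ge -1$. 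One must then prove the delicate Claim~\ref{clm:allocation} --- via the decomposition of a putative positive cycle into blue and red arcs --- that the allocation remains envy-freeable under $\bar{\mbs{v}}$, and finally observe that arc weights only increase from $\mbs{v}$ to $\bar{\mbs{v}}$, so the subsidies for the original instance are no larger. This detour through a modified instance, or some substitute for it, is the missing step in your proposal.
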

It is easy to see that, when minimizing the total subsidy, at least one agent will not receive a subsidy.
Thus Theorem~\ref{thm:main} implies that the total subsidy required is indeed at most $n-1$ dollars.

In Section~\ref{s:monotone} we consider the general setting where the agents have arbitrary 
monotone valuation functions. Analogously, without loss of generality, we may scale the valuations so that 
marginal value of each item for any agent never exceeds one dollar. We show that there is an envy-freeable allocation in 
which the subsidy required is at most $2(n-1)$ dollars per agent.
Thus, the total subsidy required to ensure the existence of an envy-free allocation at most $O(n^2)$. 
Note that the assumption of monotonicity is extremely mild and so the valuations the agents have for bundles of items 
may range from $0$ to $\Omega(m)$ in quite an arbitrary manner. Consequently, it is somewhat remarkable that
the total subsidy required to ensure the existence of an envy-free allocation is independent of the number of items $m$.
In particular, when $m$ is large the subsidy required is negligible in terms of $m$ and thus, typically, also negligible 
in terms of the values of the allocated bundles.
In this case, given a valuation oracle for each agents, the corresponding envy-free allocation and subsidies
can be computed in polynomial time. Specifically, in Section~\ref{s:monotone} we prove:

\begin{thm}\label{thm:monotone}
For monotonic valuations there is 
an envy-freeable allocation where the subsidy to each agent is at most $2(n-1)$ dollars. (Given a valuation oracle, 
this allocation can be computed in polynomial time.)
\end{thm}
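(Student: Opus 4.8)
The plan is to combine a known EF1 construction with a single global reassignment step, and then read the subsidy off the longest path in the envy graph. Recall (Section~\ref{ss:envyfreeable}) that an allocation is envy-freeable precisely when it maximizes utilitarian social welfare among all reassignments of its bundles to the agents, and that in this case the minimum subsidy to an agent $i$ equals the maximum weight of a directed path leaving $i$ in the envy graph, where the arc from $i$ to $j$ carries weight $v_i(A_j)-v_i(A_i)$. Thus it suffices to produce an allocation whose bundles, when reassigned optimally, induce an envy graph in which every directed path has weight at most $2(n-1)$.

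First I would compute an EF1 allocation with bundles $B_1,\dots,B_n$ (agent $i$ holding $B_i$) using the envy-cycle elimination procedure of \citet{LMM04}, which runs in polynomial time with a value oracle and applies to arbitrary monotone valuations. The only consequence of EF1 I need is a uniform bound on envy arcs: for every pair $i,\ell$ there is a good $g\in B_\ell$ with $v_i(B_i)\ge v_i(B_\ell\setminus\{g\})$, and since the marginal value of any single good is at most one dollar we get $v_i(B_\ell)\le v_i(B_\ell\setminus\{g\})+1\le v_i(B_i)+1$. Hence $v_i(B_\ell)-v_i(B_i)\le 1$ for all $i$ and all $\ell$.

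Next I would pass to the welfare-maximizing reassignment of these fixed bundles: let $\pi$ be a permutation maximizing $\sum_i v_i(B_{\pi(i)})$ (a maximum-weight bipartite matching between agents and bundles, computable in polynomial time) and set $A_i=B_{\pi(i)}$. By the characterization above this allocation is envy-freeable. To bound the subsidy, write $g_i=v_i(B_{\pi(i)})-v_i(B_i)$ for the value agent $i$ gains from the reassignment. The EF1 bound gives $g_i\le 1$ for every $i$, while optimality of $\pi$ against the identity permutation gives $\sum_i g_i=\sum_i v_i(B_{\pi(i)})-\sum_i v_i(B_i)\ge 0$. Now take any simple directed path $i_0\to i_1\to\cdots\to i_k$ in the envy graph of $A$. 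Bounding each arc with the EF1 inequality,
\[
v_{i_t}(A_{i_{t+1}})-v_{i_t}(A_{i_t})=v_{i_t}(B_{\pi(i_{t+1})})-v_{i_t}(B_{\pi(i_t)})\le \bigl(v_{i_t}(B_{i_t})+1\bigr)-v_{i_t}(B_{\pi(i_t)})=1-g_{i_t},
\]
so the total path weight is at most $k-\sum_{t=0}^{k-1}g_{i_t}$. Since each $g_i\le 1$ and $\sum_i g_i\ge 0$, the gains over the at most $n-k$ agents off the path prefix sum to at most $n-k$, whence $-\sum_{t=0}^{k-1}g_{i_t}\le n-k$ and the path weight is at most $k+(n-k)=n\le 2(n-1)$ (for $n\ge 2$; the case $n=1$ is trivial). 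Setting each agent's payment equal to its longest-path value (nonnegative, and zero at a sink) therefore yields an envy-free allocation with subsidy at most $2(n-1)$ per agent.

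The hard part is the subsidy bound in the last step, not the construction. EF1 alone controls each individual envy arc only by $1-g_i$, and a single $g_i$ can be as negative as $-(n-1)$, so there is no constant per-arc bound and a naive sum over a path of length $n-1$ does not close. The point of the argument is that these possible losses are globally coupled through the one welfare-optimality inequality $\sum_i g_i\ge 0$, which caps the total loss any path can accumulate; combining the per-arc EF1 estimate with this single global constraint is exactly what converts an a priori unbounded sum into a bound independent of $m$. All three ingredients---the EF1 allocation, the optimal reassignment, and the longest-path payments---are computable in polynomial time given a value oracle, which yields the algorithmic claim of Theorem~\ref{thm:monotone}.
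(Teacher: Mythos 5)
Your proposal is correct and follows essentially the same route as the paper's proof (Lemma~\ref{lem:redistribute}): an EF1 allocation from envy-cycle elimination, a welfare-maximizing reassignment of its fixed bundles, and a bound on path weights in the resulting envy graph obtained by combining the per-arc EF1 estimate $1-g_{i_t}$ with the single global inequality $\sum_i g_i \ge 0$. Your bookkeeping---charging the at most $n-k$ off-path agents directly, rather than bounding the EF1 term and the loss term by $n-1$ separately as the paper does via its $R$/$S$ split---even yields the slightly sharper per-agent bound of $n$ in place of $2(n-1)$.
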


In effect, our work implies that there is, in fact, a much stronger connection between the classical divisible goods (cake-cutting) setting and the indivisible goods setting than was previously known. While the classical guarantees (envy-freeness and proportionality) can be achieved with divisible goods, for the indivisible-goods setting much of the recent literature focuses on achieving weaker fairness properties. We show that by simply introducing a small subsidy that only depends on the number of agents, the much stronger classical guarantees \textit{can} be achieved in the indivisible goods setting. Moreover, allocations that give these classical guarantees with a small bounded subsidy can be efficiently found.

\section{The Fair Division with Subsidy Problem} \label{s:Preliminaries}

There is a set $I=\{1,2,\dots, n\}$ of agents and a set $J=\{1,2,\dots, m\}$ of indivisible goods (items). 
Each agent $i\in I$ has a valuation function $v_i$ over the set of items. That is,
for each bundle $S\subseteq J$ of items, agent $i$ has value $v_i(S)$. We make the standard assumptions that the valuation 
functions are {\em monotonic}, 
that is, $v_i(S)\leq v_i(T)$ when $S\subseteq T$, and that $v_i(\emptyset) = 0$. An agent $i$ and valuation function $v_i$ 
are \textit{additive} if, for each item $j\in J$, agent $i$ has value $v_i(j)=v_i(\{j\})$, and for any collection $S\subseteq J$, 
agent $i$ has value $v_i(S)=\sum_{j\in S} v_i(j)$. We denote the vector of valuation functions by $\mbs{v}=(v_1,\ldots,v_n)$, 
and call $\mbs{v}$ a {\em valuation profile}. Additionally, without loss of generality we scale each agent $i$'s valuation function 
so that the maximum marginal value of any item $j$ is at most $1$. Specifically, for additive valuations, 
this implies $v_i(j) \leq 1$ for every agent $i$ and item $j$.

An {\em allocation} is an ordered partition $\mathcal{A} = \{A_1,\ldots,A_n\}$ of the set of items into $n$ bundles.
Agent~$i$ receives the (possibly empty) bundle $A_i$ in the allocation $\mathcal{A}$. The allocation $\mathcal{A}$ is \textit{envy-free} if
\begin{align*}
    v_i(A_i) \ \geq\  v_i(A_{k}) \quad\quad\quad \forall i\in I, \forall k\in I.
\end{align*}
That is, for any pair of agents $i$ and $k$, agent $i$ prefers its own bundle $A_{i}$ 
over the bundle $A_k$. In the (envy-free) {\em fair division problem} the objective is to find
an envy-free allocation of the items.

Unfortunately, this objective is generally impossible to satisfy. A natural relaxation of the objective arises by incorporating
subsidies. Specifically, let $\mbs{p}=(p_1,\ldots,p_n)$ be a non-negative subsidy vector, where agent $i$ 
receives a payment $p_i\ge 0$. An \textit{allocation with payments} $(\mathcal{A},\mbs{p})$ is then envy-free if
\begin{align*}
v_i(A_i) +p_i  \geq v_i(A_{k}) +p_k \quad\quad\quad \forall i\in I, \forall k\in I.
\end{align*}
That is, each agent prefers its bundle plus payment over the bundle plus payment of every other agent.
In the {\em fair division with subsidy problem} the objective is to find an envy-free allocation with payments
whose total subsidy $\sum_{i\in I} p_i$ is minimized.

\subsection{Envy-Freeability and the Envy Graph} \label{ss:envyfreeable}
For any fixed allocation $\mathcal{A}$, a payment vector $\mbs{p}$ such that $\{\mathcal{A}, \mbs{p}\}$ is
envy-free does not always exist. To see this, consider an instance with a single item and agents $I=\{1,2\}$ 
with values $v_1<v_2$ for the item. Now take the fixed allocation where the item is given to agent~$1$.
It follows that agent~$2$ must receive a payment of at least $v_2$ to eliminate its envy. But then, because $v_2 > v_1$, agent~$1$ is 
envious of the bundle plus payment allocated to agent~$2$. Thus, no payment vector can eliminate the envy of both agents for 
this allocation.

We call an allocation $\mathcal{A}$ \textit{envy-freeable} if there 
exists a payment vector $\mbs{p} = (p_1,\ldots,p_n)$ such that $\{\mathcal{A},\mbs{p}\}$ is envy-free. 
There is a nice graphical characterization for the envy-freeability of an allocation $\mathcal{A}$.
The {\em envy graph}, denoted $G_{\mathcal{A}}$, for an allocation $\mathcal{A}$ is a 
complete directed graph with vertex set $I$. For any pair of agents $i,k\in I$ the weight of arc $(i,k)$ in $G_{\mathcal{A}}$
is the envy agent $i$ has for agent $k$ under the allocation $\mathcal{A}$, that is, $w_{\mathcal{A}}(i,k) \ =\  v_i(A_{k}) - v_i(A_i)$.

An allocation is envy-freeable if and only if its envy graph does not contain a positive-weight directed cycle.
More generally, \citet{HS19} obtained the following theorem; we include their proof in order to 
familiarize the reader with the structure of envy-freeable allocations.
\begin{thm} \label{thm:hsthm1}
\cite{HS19} The following statements are equivalent.
\vspace{-.25cm}
\begin{enumerate}[noitemsep]
\item[(a)] The allocation $\mathcal{A}$ is envy-freeable.
\item[(b)] The allocation $\mathcal{A}$ maximizes (utilitarian) welfare across all reassignments of its bundles to agents: for 
    every permutation $\pi$ of $I=[n]$, we have $\sum_{i\in I}v_i(A_i) \geq \sum_{i\in I}v_i(A_{\pi(i)})$.
 \item[(c)] The envy graph $G_{\mathcal{A}}$ contains no positive-weight directed cycles.
\end{enumerate}
\end{thm}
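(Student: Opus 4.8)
The plan is to establish the three equivalences by proving the cycle of implications $(a)\Rightarrow(c)\Rightarrow(b)\Rightarrow(a)$, exploiting that all three conditions can be read off directly from the arc weights $w_{\mathcal{A}}(i,k)=v_i(A_k)-v_i(A_i)$ of the envy graph. The unifying idea is that a payment vector behaves like a \emph{potential} on $G_{\mathcal{A}}$, so envy-freeability, welfare-maximality, and the cycle condition are three faces of the existence of such a potential.

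First, for $(a)\Rightarrow(c)$ I would start from a payment vector $\mbs{p}$ witnessing envy-freeness and rewrite each envy-free inequality as $p_i-p_k\ge w_{\mathcal{A}}(i,k)$ for every ordered pair $i,k$. Summing these inequalities around an arbitrary directed cycle, the right-hand side produces the weight of the cycle while the left-hand side telescopes to $0$; hence no directed cycle can have positive weight. This step is routine.

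Next, for $(c)\Rightarrow(b)$ the key observation is that for any permutation $\pi$ the welfare change equals $\sum_{i\in I}v_i(A_{\pi(i)})-\sum_{i\in I}v_i(A_i)=\sum_{i\in I}w_{\mathcal{A}}(i,\pi(i))$. Decomposing $\pi$ into its disjoint cycles, the contribution of each nontrivial cycle is exactly the weight of a directed cycle in $G_{\mathcal{A}}$, while each fixed point contributes $w_{\mathcal{A}}(i,i)=0$. Since $(c)$ forbids positive-weight cycles, every cyclic contribution is at most $0$, so the total is at most $0$ and $(b)$ follows.

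The final and most substantive step is $(b)\Rightarrow(a)$, where payments must actually be produced. I would first note that $(b)$ itself rules out positive-weight cycles: any such cycle, extended to a permutation by the identity on the remaining agents, would be a strictly welfare-improving reassignment, contradicting $(b)$. With positive cycles excluded, I define the potential $p_i$ to be the maximum weight of a directed path in $G_{\mathcal{A}}$ starting at vertex $i$; the absence of positive cycles ensures this maximum is attained by a simple path and is therefore finite and well-defined, and the trivial one-vertex path shows $p_i\ge 0$. For every arc $(i,k)$, prepending that arc to a longest path from $k$ gives $p_i\ge w_{\mathcal{A}}(i,k)+p_k$, which rearranges to $v_i(A_i)+p_i\ge v_i(A_k)+p_k$ — precisely the envy-free condition. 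I expect this construction to be the main obstacle: the crux is recognizing that the longest-path potential is the right object, that its well-definedness is exactly where the no-positive-cycle consequence of $(b)$ is needed, and that it simultaneously yields nonnegative payments and eliminates all envy. Once it is set up, the verification collapses to the one-line rearrangement above.
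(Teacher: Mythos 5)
Your proof is correct, and it uses exactly the same ingredients as the paper: telescoping the payment differences around a directed cycle, relating welfare changes under a permutation to cycle weights in the envy graph, and the longest-path potential $p_i=\ell_{G_{\mathcal{A}}}(i)$ for constructing payments. The only structural difference is that you close the cycle of implications in the opposite order, $(a)\Rightarrow(c)\Rightarrow(b)\Rightarrow(a)$, whereas the paper proves $(a)\Rightarrow(b)\Rightarrow(c)\Rightarrow(a)$. This forces you to carry one small extra step in each direction: your $(c)\Rightarrow(b)$ needs the full decomposition of an arbitrary permutation into disjoint cycles plus fixed points (the paper avoids this by proving the contrapositive-flavoured $(b)\Rightarrow(c)$, which only ever extends a single cycle to a permutation by the identity), and your $(b)\Rightarrow(a)$ must first re-derive the no-positive-cycle property from welfare-maximality before invoking the potential, effectively folding the paper's $(b)\Rightarrow(c)$ into that step. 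Both arrangements are equally rigorous; yours makes the ``potential on the envy graph'' viewpoint slightly more explicit, while the paper's ordering keeps each individual implication a one-idea argument.
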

\begin{proof}\ \\
$(a) \Rightarrow (b)$: Let $\mathcal{A} = \{A_1,\ldots,A_n\}$ be envy-freeable. Then, by definition, there exists a payment 
vector $\mbs{p}$ such that $v_i(A_i) + p_i \geq v_i(A_{k}) + p_{k}$, for any pair of agents $i$ and $k$.
Rearranging, we have $v_i(A_{k})-v_i(A_i) \leq p_i - p_{k}$. Then, for any permutation $\pi$ of $I=[n]$
\begin{align*}
\sum_{i\in I} \left(v_i(A_{\pi(i)}) - v_i(A_i)\right) \ \leq\  \sum_{i\in I} \left(p_i - p_{\pi(i)}\right) \ = \ \sum_{i\in I} p_i - \sum_{i\in I} p_{\pi(i)} \ = \ 0.
\end{align*}
Thus the allocation $\mathcal{A}$ maximizes welfare over all reassignments of its bundles.\\
$(b) \Rightarrow (c)$: 
Assume $\mathcal{A}$ maximizes welfare over all reassignments of its bundles and take a
directed cycle $C$ in the envy graph $G_{\mathcal{A}}$. 
Without loss of generality $C=\{1,2,\dots,r\}$ for some $r\ge 2$. Now define
a permutation $\pi_C$ of $I$ according to the following rules: (i) $\pi_C(i) = i+1$ for each $i\le r-1$, (ii) $\pi_C(r) = 1$, and (iii) $\pi_C(i) = i$ otherwise.
Then the weight of the cycle $C$ in the envy graph satisfies
\begin{align*}
w_{\mathcal{A}}(C) &= \sum_{(i,k)\in C} w_{\mathcal{A}}(i,k) \\
 &= \sum_{i=1}^{r-1} \left(v_i(A_{i+1}) - v_i(A_i)\right) + \left( v_r(A_{1}) - v_r(A_r)\right)\\
  &= \sum_{i=1}^{r-1} \left(v_i(A_{i+1}) - v_i(A_i)\right) + \left( v_r(A_{1}) - v_r(A_r)\right) + \sum_{i=r+1}^{n} \left(v_i(A_{i}) - v_i(A_i)\right) \\
    &= \sum_{i\in I}v_i(A_{\pi(i)})-v_i(A_i) \\
    & \leq 0.
\end{align*}
The inequality holds as $\mathcal{A}$ maximizes welfare over all bundle reassignments. Thus $C$ has non-positive weight.\\
$(c) \Rightarrow (a)$: Assume the envy graph $G_{\mathcal{A}}$ contains no positive-weight directed cycles. Let $\ell_{G_{\mathcal{A}}}(i)$ be the 
maximum weight of any path (including the empty path) that starts at vertex $i$ in $G_{\mathcal{A}}$.
For each agent $i\in I$, set its payment $p_i=\ell_{G_{\mathcal{A}}}(i)$. Observe that $p_i\ge 0$
as the empty path has weight zero.
The corresponding pair $(\mathcal{A},\mbs{p})$ is then envy-free. To see this, recall that there are no positive-weight cycles.
Therefore, for any pair of agents $i$ and $k$, we have 
\begin{align*}
p_i \ =\  \ell_{G_{\mathcal{A}}}(i) \ \geq\  w_{\mathcal{A}}(i,k) + \ell_{G_{\mathcal{A}}}(k)  \ =\  \left( v_i(A_{k}) - v_i(A_i)\right)+ p_{k}.
\end{align*}
Thus $v_i(A_i) +p_i  \geq v_i(A_{k}) +p_k$ and the allocation $\mathcal{A}$ is envy-freeable.
\end{proof}
Theorem~\ref{thm:hsthm1} is important for two reasons.
First, whilst an allocation $\mathcal{A}=\{A_1,A_2, \dots,A_n\}$ need not be envy-freeable, Condition~(b) tells us that
there is some permutation $\pi$ of the bundles in $\mathcal{A}$ such that the resultant allocation,
$\mathcal{A}^{\pi}=\{A_{\pi(1)},A_{\pi(2)}, \dots, A_{\pi(n)}\}$, is envy-freeable! 
For example, consider again the simple one-item, two-agent instance above. If the item is allocated to agent~$1$
then the weight on the arc $(1,2)$ is~$-v_1$ and the weight on the arc $(2,1)$ is~$v_2$. Because $v_1 < v_2$, 
the envy graph has a positive-weight directed cycle $\{1,2\}$ and so, by Theorem~\ref{thm:hsthm1}, this allocation is not 
envy-freeable. However, suppose we fix the bundles and find a utility-maximizing reallocation of these fixed bundles. 
This reallocation assigns the item to agent~$2$ and now there is no positive-weight directed 
cycle in the resultant envy-free graph; consequently this allocation is envy-freeable by providing a subsidy in the range $[v_1,v_2]$ 
to agent~$1$.

Second, to calculate the subsidy vector $\mbs{p}$ associated with an envy-freeable allocation, such as $\mathcal{A}^{\pi}$,
it suffices to calculate the maximum-weight paths beginning at each vertex in its envy graph. (In fact, it is straightforward to prove that
the heaviest-path weights \textit{lower bound} the payment to each agent in any envy-free payment vector of an envy-freeable
allocation~\cite{HS19}.) Note that given any payment vector that eliminates envy, we may uniformly increase or 
decrease the payments to all agents while maintaining envy-freeness. As a consequence, in the payment vector 
that minimizes the total subsidy, there is at least one agent that receives a payment of $0$.
Together these arguments give the following very useful observation.
\begin{observation}\label{obs:useful}
For any envy-freeable allocation $\mathcal{A}$, the minimum total subsidy required is at most~$(n-1)\cdot \ell_{G_\mathcal{A}}^{\max}$, where
$\ell_{G_\mathcal{A}}^{\max}$ is the maximum weight of a directed path in the envy graph $G_\mathcal{A}$.
\end{observation}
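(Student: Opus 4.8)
The plan is to exhibit a single envy-free payment vector whose total is at most $(n-1)\cdot \ell_{G_\mathcal{A}}^{\max}$; since the minimum total subsidy is by definition no larger than the cost of any particular envy-free vector, producing one such vector suffices.

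First I would invoke the construction from the proof of Theorem~\ref{thm:hsthm1} (the $(c)\Rightarrow(a)$ direction). Because $\mathcal{A}$ is envy-freeable, its envy graph $G_{\mathcal{A}}$ contains no positive-weight directed cycle, so setting $p_i = \ell_{G_{\mathcal{A}}}(i)$, the heaviest-path weight out of vertex $i$, yields an envy-free payment vector. Two properties of this vector drive the bound: each payment is non-negative, since the empty path has weight $0$; and, by the definition of $\ell_{G_{\mathcal{A}}}^{\max}$ as the largest weight of any directed path in $G_{\mathcal{A}}$, each payment satisfies $p_i \le \ell_{G_{\mathcal{A}}}^{\max}$. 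Hence every entry lies in the interval $[0,\ell_{G_{\mathcal{A}}}^{\max}]$.

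Second, I need at least one agent to receive exactly $0$, so that only $n-1$ agents contribute to the total. This is precisely the normalization remark preceding the statement: a uniform shift of all payments preserves envy-freeness, so I may subtract $\min_j p_j$ from every entry. The resulting vector is still non-negative and envy-free, still has every entry bounded by $\ell_{G_{\mathcal{A}}}^{\max}$, and now has at least one zero entry. Alternatively, one can argue directly that the endpoint $t$ of a globally heaviest path satisfies $\ell_{G_{\mathcal{A}}}(t)=0$: any positive-weight path leaving $t$ could be appended to the heaviest path and its (non-positive) cycles pruned, producing a simple path heavier than $\ell_{G_{\mathcal{A}}}^{\max}$, a contradiction.

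Combining the two facts, the (shifted) vector assigns $0$ to some agent and at most $\ell_{G_{\mathcal{A}}}^{\max}$ to each of the remaining $n-1$ agents, so its total is at most $(n-1)\cdot \ell_{G_{\mathcal{A}}}^{\max}$; as the minimum total subsidy cannot exceed the cost of this valid envy-free vector, the claim follows. I do not expect a genuine obstacle here: the only point requiring care is justifying the existence of a zero-payment agent, which the uniform-shift argument handles and which does not depend on any structure of the valuations beyond envy-freeability of $\mathcal{A}$.
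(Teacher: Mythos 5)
Your proposal is correct and follows essentially the same route as the paper, which derives the observation from exactly these ingredients: the heaviest-path payment vector $p_i=\ell_{G_{\mathcal{A}}}(i)$ from the $(c)\Rightarrow(a)$ direction of Theorem~\ref{thm:hsthm1}, the bound $p_i\le \ell_{G_{\mathcal{A}}}^{\max}$, and the uniform-shift normalization guaranteeing a zero-payment agent. Your alternative direct argument that the endpoint of a globally heaviest path already has $\ell_{G_{\mathcal{A}}}(t)=0$ is a small bonus but not needed.
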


\citet{HS19} then prove:
\begin{thm} \label{thm:hsthmub}\cite{HS19} 
For any envy-freeable allocation $\mathcal{A}$, the minimum total subsidy required is at most~$(n-1)\cdot m$.
\end{thm}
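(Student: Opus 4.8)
The plan is to reduce the theorem entirely to a bound on the maximum path weight in the envy graph, and then to prove that bound by a one-line inequality. Observation~\ref{obs:useful} already gives that the minimum total subsidy for an envy-freeable allocation $\mathcal{A}$ is at most $(n-1)\cdot \ell_{G_\mathcal{A}}^{\max}$, where $\ell_{G_\mathcal{A}}^{\max}$ is the maximum weight of a directed path in $G_{\mathcal{A}}$. Hence the entire statement follows once we establish the single inequality $\ell_{G_\mathcal{A}}^{\max} \le m$; that is, it suffices to show that no directed path in the envy graph has weight exceeding $m$.

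First I would argue that a heaviest path may be taken to be \emph{simple}, visiting each vertex at most once. Since $\mathcal{A}$ is envy-freeable, Theorem~\ref{thm:hsthm1}(c) guarantees that $G_{\mathcal{A}}$ contains no positive-weight directed cycle. Therefore any walk that repeats a vertex encloses a directed cycle of nonpositive weight, which can be excised without decreasing the total weight; iterating, we obtain a simple path of weight at least that of the original walk. This also ensures $\ell_{G_\mathcal{A}}^{\max}$ is finite and attained. Fix such a maximum-weight simple path $P=(i_0,i_1,\dots,i_k)$; its vertices are distinct, so in particular the target vertices $i_1,\dots,i_k$ are distinct agents.

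Next I would bound each arc weight by the cardinality of its target bundle. By definition of the envy graph, $w_{\mathcal{A}}(i_t,i_{t+1}) = v_{i_t}(A_{i_{t+1}}) - v_{i_t}(A_{i_t})$. Because the valuations are additive with $v_i(j)\le 1$ for every item $j$, we have $v_{i_t}(A_{i_{t+1}}) \le |A_{i_{t+1}}|$; discarding the nonnegative term $v_{i_t}(A_{i_t})\ge 0$ then gives $w_{\mathcal{A}}(i_t,i_{t+1}) \le |A_{i_{t+1}}|$. Summing over the arcs of $P$ yields
\begin{align*}
    \ell_{G_\mathcal{A}}^{\max} \ =\ \sum_{t=0}^{k-1} w_{\mathcal{A}}(i_t,i_{t+1}) \ \le\ \sum_{t=1}^{k} |A_{i_t}| \ \le\ m,
\end{align*}
where the last inequality holds because the distinct agents $i_1,\dots,i_k$ own pairwise-disjoint bundles contained in $J$, so their sizes sum to at most $|J|=m$. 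Substituting $\ell_{G_\mathcal{A}}^{\max}\le m$ into Observation~\ref{obs:useful} delivers the claimed bound of $(n-1)\cdot m$.

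I do not anticipate a genuine obstacle: the proof is a short chain of inequalities once the framing through Observation~\ref{obs:useful} is in place. The only step demanding a little care is the reduction to simple paths, where one must explicitly invoke the no-positive-cycle property both to guarantee that the maximum is attained and to justify deleting repeated vertices. After that reduction, the crucial mileage comes from the fact that the target bundles along a simple path are disjoint, so bounding each arc by the size of its target bundle and summing cannot overcount past the $m$ available items.
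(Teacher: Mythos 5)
Your proposal is correct and follows essentially the same route as the paper: reduce via Observation~\ref{obs:useful} to bounding the heaviest path weight by $m$, then bound each arc weight $w_{\mathcal{A}}(i,k) = v_i(A_k) - v_i(A_i) \le v_i(A_k) \le |A_k|$ and sum, using the disjointness of the bundles along the path. Your extra care in justifying that the maximum-weight path may be taken simple (via the no-positive-cycle property) is a point the paper leaves implicit in its convention that paths are simple, but it does not change the argument.
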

\begin{proof}
In a minimum subsidy vector, at least one agent requires no subsidy.
Thus it suffices to show that the subsidy to any agent~$i$ is at most $m$. By Observation~\ref{obs:useful},
it suffices to show that the heaviest path weight starting at any vertex is at most $m$.
Without loss of generality, let the heaviest path be $P=\{1,2,\dots,r\}$.
The subsidy made to agent~$1$ can then be upper bounded by
     \begin{equation*}
\ell_{G_{\mathcal{A}}}(1) 
\  = \  \sum_{(i,k)\in P} w_{\mathcal{A}}(i,k) 
 \ =\  \sum_{i=1}^{r-1} \left(v_i(A_{i+1}) - v_i(A_i)\right)
 \   \le\   \sum_{i=1}^{r-1} v_i(A_{i+1}) 
 \   \le \  \sum_{i=1}^{r-1} |A_{i+1}| 
    \  \le\   |J| 
    \   =\  m.
    \end{equation*}
    
    Here the second inequality holds because each agent has value at most one for any item. The third
    inequality is due to the fact that for the allocation $\mathcal{A}$ the bundles $\{A_1, A_2,\dots, A_n\}$ are disjoint.
    Consequently $p_i\le m$ for each agent, as required.
\end{proof}

For an arbitrary envy-freeable allocation $\mathcal{A}$ the bound in Theorem~\ref{thm:hsthmub} is tight. 
To see this, consider the example where every agent has value $1$ for each item, and the grand 
bundle (containing all items) is given to agent~$1$. This allocation 
is envy-freeable, and here each of the other $n-1$ agents requires a subsidy of $m$ for envy-freeness.
Ergo, to provide an improved bound on the total subsidy, we cannot consider any generic envy-freeable allocation.
Instead, our task is find a specific envy-freeable allocation where the heaviest paths in the associated envy graph 
have much smaller weight. In particular, for the case of additive agents, we want that these path weights are at most $1$ 
rather than at most $m$. This is our goal in the subsequent sections of the paper.

Before doing this, let us briefly discuss some computational aspects. Theorem~\ref{thm:hsthm1} provides efficient methods to test if a 
given allocation is envy-freeable. For example, this can be 
achieved via a maximum-weight bipartite matching algorithm to verify Condition~(b). Alternatively, 
Condition~(c) can be tested in polynomial time using the Floyd-Warshall algorithm.\footnote{In fact, a simple reduction 
converts the problem of finding minimum payments for a fixed allocation into a shortest-paths problem and any 
efficient shortest-paths algorithm can be applied.}
Finally, given an arbitrary non-envy-freeable allocation $\mathcal{A}$, one can efficiently find a corresponding 
envy-freeable allocation $\mathcal{A}^{\pi}$ by fixing the $n$ bundles of the given allocation 
and computing a maximum-weight bipartite matching between the agents and the bundles.

\section{An Allocation Algorithm for Additive Agents} \label{s:alg}
In this section we present an allocation algorithm for the case of additive agents. 
Recall our task is to construct an envy-freeable allocation $\mathcal{A}$ 
with maximum path weight $1$ in the envy graph $G_{\mathcal{A}}$.
We do this via an allocation algorithm defined on the valuation graph for the
instance. The valuation graph $H$ is the complete bipartite graph on vertex sets $I$ and $J$, where edge $(i,j)$ has weight $v_i(j)$.
We denote by $h[\hat{I}, \hat{J}]$ the subgraph of $H$ induced by $\hat{I}\subseteq I$ and $\hat{J}\subseteq J$.
The allocation algorithm then proceeds in rounds where each agent is matched to exactly one item in each round. 
For the first round, we set $J_1=J$. In round $t$, we then find a maximum-weight matching $M_t$ in $H[I,J_t]$. If agent $i$ is matched to item
$j=\mu^t_i$ then we allocate item $\mu^t_i$ to that agent. We then recurse on the remaining items $J_{t+1}=J_t\setminus \cup_{i\in I} \mu^t_i$.
The process ends when every item has been allocated.
This procedure is formalized via pseudocode in Algorithm~\ref{alg:allocation}.
 \begin{algorithm}[ht]
 \SetAlgoLined
  $A_i\gets \emptyset$ for all $i\in I$\;
  $t\gets1; J_1\gets J$\;
  \While{$J_t\neq \emptyset$}{
   Compute a \textit{maximum-weight matching} $M^t=\{(i, \mu^t_i)\}_{i\in I}$ in $H[I, J_t]$\;
   Set $A_i\gets A_i\cup \{\mu^t_i\}$ for all $i\in I$\;
   Set $J_{t+1}\gets J_t\setminus \cup_{i\in I} \mu^t_i$\;
   $t\gets t+1$\;
   }
  \caption{Bounded-Subsidy Algorithm}\label{alg:allocation}
 \end{algorithm}

Suppose the algorithm terminates in $T$ rounds. We assume that every agent receives an item in each round. 
For rounds $1$ to $T-1$ this is evident because agent $i$ can be assigned a item for which it has zero value.
For round $T$, we assume there are exactly $n$ items remaining, possibly by adding dummy items of 
no value to any agent. 

This algorithm has many interesting properties. In this section we prove that it outputs an envy-freeable allocation $\mathcal{A}$.
Furthermore, the allocation $\mathcal{A}$ is EF1, thus settling Conjecture~\ref{conj:hsconj2}.
The allocation is also balanced in that (discarding any additional dummy items)
the bundles that the agents receive differ in size by at most one item; in particular, each agent receives a bundle 
of size either $\lfloor \frac{m}{n}\rfloor$ or $\lceil \frac{m}{n}\rceil$. The allocation algorithm also clearly runs in polynomial time.

We also show in this section that any allocation $\mathcal{A}$ that is both envy-freeable and EF1 has a heaviest path weight in the
envy graph of weight at most $n-1$. Thus, By Observation~\ref{obs:useful}, the algorithm outputs an allocation that requires a 
subsidy of at most $(n-1)^2$. 
As claimed though, the heaviest path weight in $G_\mathcal{A}$ is in fact at most one and so the total subsidy
needed is at most $n-1$. We defer the proof of this fact, our main result, to Section~\ref{s:bound}.

\subsection{The Allocation Is Envy-freeable}
Let's first see that the output allocation $\mathcal{A}$ is envy-freeable.
\begin{lem}\label{lem:envyfreeable}
The output allocation $\mathcal{A}$ is envy-freeable.
\end{lem}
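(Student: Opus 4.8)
The plan is to verify condition~(b) of Theorem~\ref{thm:hsthm1}: I will show that $\mathcal{A}$ maximizes utilitarian welfare over all reassignments of its own bundles, i.e., that $\sum_{i\in I} v_i(A_i) \ge \sum_{i\in I} v_i(A_{\pi(i)})$ for every permutation $\pi$ of $I$. Since envy-freeability is equivalent to this welfare-maximization property, establishing the inequality immediately gives the lemma.

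The key idea is to exploit additivity together with the round structure of the algorithm to decompose every reassignment round-by-round. Writing $A_i=\{\mu^1_i,\mu^2_i,\dots,\mu^T_i\}$, additivity gives $\sum_{i\in I} v_i(A_i)=\sum_{t=1}^{T}\sum_{i\in I} v_i(\mu^t_i)=\sum_{t=1}^{T} w(M^t)$, where $w(M^t)$ is the weight of the matching computed in round $t$. Likewise, for any permutation $\pi$, additivity yields $\sum_{i\in I} v_i(A_{\pi(i)})=\sum_{t=1}^{T}\sum_{i\in I} v_i(\mu^t_{\pi(i)})$. The crucial observation is that, for each fixed round $t$, the assignment $i\mapsto \mu^t_{\pi(i)}$ is itself a perfect matching between the agents $I$ and the item set $\{\mu^t_k : k\in I\}$ allocated in round $t$; because $\pi$ is a permutation these items are distinct, and they all lie in $J_t$. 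Hence it is a feasible matching in $H[I,J_t]$, and since $M^t$ is a \emph{maximum}-weight matching in $H[I,J_t]$ we get the per-round inequality $w(M^t)\ \ge\ \sum_{i\in I} v_i(\mu^t_{\pi(i)})$.

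Summing this inequality over all rounds $t=1,\dots,T$ and recombining via the two decompositions above yields $\sum_{i\in I} v_i(A_i)=\sum_{t=1}^{T} w(M^t)\ \ge\ \sum_{t=1}^{T}\sum_{i\in I} v_i(\mu^t_{\pi(i)})=\sum_{i\in I} v_i(A_{\pi(i)})$, which is exactly condition~(b). Applying Theorem~\ref{thm:hsthm1} then shows that $\mathcal{A}$ is envy-freeable.

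I do not expect a serious obstacle here; the argument is short once the round-by-round decomposition is set up. The one point requiring care is the per-round step: one must use that $M^t$ is optimal over the \emph{entire} remaining set $J_t$, not merely over the $n$ items it happens to select, so that the competing matching $i\mapsto\mu^t_{\pi(i)}$ (whose items form a subset of $J_t$) is a legitimate competitor whose weight $M^t$ must dominate. Verifying that $i\mapsto\mu^t_{\pi(i)}$ is indeed a valid matching in $H[I,J_t]$ — distinct items, all drawn from $J_t$ — is the only place where the permutation structure and the recursion $J_{t+1}=J_t\setminus\bigcup_{i\in I}\mu^t_i$ are actually needed, and it is routine.
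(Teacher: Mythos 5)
Your proof is correct and is essentially the paper's argument: both rest on the same round-by-round decomposition via additivity and the observation that, in each round $t$, any reassignment of the items $\{\mu^t_k\}_{k\in I}$ is a feasible matching in $H[I,J_t]$ and hence dominated by the maximum-weight matching $M^t$. The only cosmetic difference is that you verify condition~(b) of Theorem~\ref{thm:hsthm1} directly for arbitrary permutations, whereas the paper verifies the equivalent condition~(c) by showing every directed cycle in the envy graph has non-positive weight.
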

\begin{proof}
    Let $M^t$ be the maximum matching found in round $t$ and $\mbs{\mu}^t=\{\mu_1^t,\mu^t_2, \ldots, \mu_n^t\}$ the corresponding 
    items allocated in that round. By Theorem~\ref{thm:hsthm1} it suffices to show that no directed cycle in the 
    envy graph corresponding to the final allocation $\mathcal{A}$ has positive weight. Take any directed cycle $C$ in the 
    envy graph $G_{\mathcal{A}}$. Again, we may assume without loss of generality that $C=\{1,2,\dots,r\}$ for some $r\ge 2$.
    We have
    \begin{align*}
        w_{\mathcal{A}}(C) &= \sum_{(i,k)\in C} w_{\mathcal{A}}(i,k) &\\
        &= \sum_{(i,k)\in C}\left[v_i(A_{k}) - v_i(A_i)\right] &\\
        &= \sum_{(i,k)\in C} \sum_{t=1}^T \left[v_i(\mu_{k}^t) - v_i(\mu_i^t)\right] &\\
        &= \sum_{(i,k)\in C}  \sum_{t=1}^T w_{\mbs{\mu}^t}(i,k) &\\
        &=  \sum_{t=1}^T \sum_{(i,k)\in C} w_{\mbs{\mu}^t}(i,k).
    \end{align*}
    Let $\pi_C$ be the permutation of $I$ under which $\pi_C(i) = i+1$ for each $i\le r-1$, $\pi_C(r) = 1$, and $\pi_C(i) = i$ otherwise.
    In each round $t$, since $M_t$ is a maximum-weight matching, $\sum_{(i,k)\in C} w_{\mbs{\mu}^t}(i,k)$ is non-positive: 
    otherwise, the matching $\hat{M}^t$ obtained by allocating to each agent $i$ the item $\mu_{\pi_C(i)}^t$ has greater weight 
    than $M_t$, a contradiction. Thus $w_{\mathcal{A}}(C)$ is also non-positive. Consequently, by 
    Theorem~\ref{thm:hsthm1} the allocation produced by the algorithm is envy-freeable.
\end{proof}

\subsection{The Allocation Is EF1}
We say that an allocation $\mathcal{A}$ satisfies the \textit{envy bounded by a single good} 
property, and is \textit{EF1}, if for each pair $i,k$ of agents, either
$A_k = \emptyset$ or there exists an item $j\in A_{k}$ 
such that
\begin{align*}
    v_i(A_i) \geq v_i(A_{k}\setminus\{j\}).
\end{align*}
Next, let's prove the output allocation $\mathcal{A}$ is EF1. 
\begin{lem}\label{lem:ef1}
 The output allocation $\mathcal{A}$ is EF1.
\end{lem}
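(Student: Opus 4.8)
The plan is to prove EF1 by exploiting the round-by-round structure of the algorithm, since in each round every agent receives exactly one item via a maximum-weight matching. The key observation is that each bundle $A_i = \{\mu_i^1, \mu_i^2, \ldots, \mu_i^T\}$ acquires precisely one item per round, and the same is true of $A_k$. So the natural candidate for the single good $j$ to remove from $A_k$ is the item $\mu_k^1$ that agent $k$ received in the \emph{first} round.

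\medskip

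The central step is to establish a per-round inequality. First I would show that, for any two agents $i$ and $k$ and any round $t$, we have $v_i(\mu_i^t) \geq v_i(\mu_k^{t+1})$; that is, the value agent $i$ assigns to its own item in round $t$ is at least the value it assigns to the item agent $k$ receives in the \emph{following} round. The reason is that $\mu_k^{t+1}$ is an item that was still available at the start of round $t$ (it belongs to $J_{t+1} \subseteq J_t$), so agent $i$ could have been matched to it in round $t$. Since $M^t$ is a maximum-weight matching on $H[I, J_t]$, a local-exchange argument applies: if $v_i(\mu_k^{t+1}) > v_i(\mu_i^t)$, then reassigning item $\mu_k^{t+1}$ to agent $i$ (and handling the displaced item appropriately, or comparing against the matching restricted to the relevant edges) would yield a heavier matching in round $t$, contradicting maximality. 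I would phrase this carefully so that only agent $i$'s edge is perturbed, or argue via the fact that the maximum-weight matching assigns to $i$ the most valuable item consistent with the global optimum.

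\medskip

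Given this per-round inequality, the EF1 conclusion follows by telescoping. I would compute
\begin{align*}
v_i(A_k \setminus \{\mu_k^1\}) \;=\; \sum_{t=2}^{T} v_i(\mu_k^t) \;=\; \sum_{t=1}^{T-1} v_i(\mu_k^{t+1}) \;\leq\; \sum_{t=1}^{T-1} v_i(\mu_i^t) \;\leq\; \sum_{t=1}^{T} v_i(\mu_i^t) \;=\; v_i(A_i),
\end{align*}
where the first inequality applies the per-round bound term by term and the second uses monotonicity (dropping the nonnegative term $v_i(\mu_i^T)$). This is exactly the EF1 condition with the removed good being $\mu_k^1$, and it handles the $A_k \neq \emptyset$ case; the empty-bundle case is immediate.

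\medskip

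The main obstacle I anticipate is making the per-round exchange argument fully rigorous. The subtlety is that a maximum-weight matching is a global object: simply rerouting $i$ to $\mu_k^{t+1}$ leaves $i$'s old item $\mu_i^t$ and $k$'s current item $\mu_k^t$ to be reconciled, so a naive single-edge swap need not produce a valid perfect matching on its own. I would resolve this by considering the symmetric difference between $M^t$ and a modified matching, or by appealing directly to the ``no positive alternating structure'' property of maximum-weight matchings — essentially the same max-weight-matching optimality that drove Lemma~\ref{lem:envyfreeable}, but applied across consecutive rounds rather than within a single round. Care is needed because $\mu_k^{t+1}$ belongs to round $t+1$, not round $t$, so I must verify it was genuinely available in round $t$ and that the comparison respects which items have already been removed.
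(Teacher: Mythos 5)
Your proof is correct and takes essentially the same route as the paper: the key per-round inequality $v_i(\mu_i^t) \geq v_i(j)$ for any item $j \in J_{t+1}$ (of which $\mu_k^{t+1}$ is an instance), followed by the identical telescoping argument with $\mu_k^1$ as the removed good. The obstacle you anticipate is illusory --- since $\mu_k^{t+1} \in J_{t+1} = J_t \setminus \cup_{i'} \mu_{i'}^t$, that item is \emph{unmatched} in $M^t$, so replacing the single edge $(i,\mu_i^t)$ by $(i,\mu_k^{t+1})$ already yields a valid matching in $H[I,J_t]$ (agent $k$'s round-$t$ edge is untouched and $\mu_i^t$ simply becomes unmatched), which is exactly the one-line exchange the paper uses.
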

\begin{proof}
    Let $\mathcal{A} = \{A_1,\ldots,A_n\}$. Recall, in any round $t$, the algorithm 
    computes a maximum-weight matching $M^t$ in $H[I, J_t]$ and allocates item $\mu_i^t$ to agent $i$.
    Thus $A_i=\{\mu_i^1,\ldots,\mu_i^T\}$ is the set of 
    items allocated to agent $i$. Observe that
    $v_i(\mu_i^t) \geq v_i(j)$ for any item $j\in J_{t+1}$, the collection of items unallocated at the start of round $t+1$. 
    Otherwise, we can replace the edge $(i,\mu_i^t)$ with $(i,j)$ in $M_t$, to obtain a higher-weight matching in $H[I, J_t]$. 
    Therefore, for any pair of agents $i$ and 
    $k$, we have
    \begin{align*}
        v_i(A_i) &= v_i(\{\mu_i^1,\ldots,\mu_i^T\}) \\
        &= v_i(\mu_i^1) + \cdots + v_i(\mu_i^{T-1}) + v_i(\mu_i^T) \\
        &\ge v_i(\mu_i^1) + \cdots + v_i(\mu_i^{T-1})\\
        &\geq v_i(\mu_{k}^2) + \cdots + v_i(\mu_{k}^T) \\
        &= v_i(A_{k}\setminus\{\mu_{k}^1\}).
    \end{align*}
Ergo, the output allocation $\mathcal{A}$ is EF1.
\end{proof}

\begin{clm}\label{clm:envy-freeable-EF1}
Let $\mathcal{A}$ be both envy-freeable and EF1. Then the minimum total subsidy required is at most~$(n-1)^2$.
\end{clm}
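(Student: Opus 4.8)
The plan is to reduce the claim to a bound on the heaviest directed path in the envy graph and then bound each arc of that path separately. Since $\mathcal{A}$ is envy-freeable, Theorem~\ref{thm:hsthm1} guarantees that $G_{\mathcal{A}}$ contains no positive-weight directed cycle, so the path-weights $\ell_{G_{\mathcal{A}}}(i)$ are finite and Observation~\ref{obs:useful} applies: the minimum total subsidy is at most $(n-1)\cdot\ell_{G_\mathcal{A}}^{\max}$. It therefore suffices to prove that every directed path in $G_{\mathcal{A}}$ has weight at most $n-1$, and I would obtain this by controlling the weight of each arc.

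The key step is to show that the EF1 property, combined with additivity and the normalization $v_i(j)\le 1$, caps the weight of every arc at $1$. Fix an arc $(i,k)$. If $A_k=\emptyset$ then $w_{\mathcal{A}}(i,k)=v_i(A_k)-v_i(A_i)=-v_i(A_i)\le 0\le 1$. Otherwise EF1 supplies an item $j\in A_k$ with $v_i(A_i)\ge v_i(A_k\setminus\{j\})$, and since $v_i$ is additive we have $v_i(A_k\setminus\{j\})=v_i(A_k)-v_i(j)$, so rearranging gives
\begin{align*}
w_{\mathcal{A}}(i,k)=v_i(A_k)-v_i(A_i)\ \le\ v_i(j)\ \le\ 1.
\end{align*}
Hence every arc of $G_{\mathcal{A}}$ has weight at most $1$.

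To finish, I would note that because $G_{\mathcal{A}}$ has no positive-weight cycle, a heaviest directed path is attained by a simple path, which visits each of the $n$ vertices at most once and therefore uses at most $n-1$ arcs. Combining this with the per-arc bound yields $\ell_{G_\mathcal{A}}^{\max}\le (n-1)\cdot 1=n-1$, and Observation~\ref{obs:useful} then gives a minimum total subsidy of at most $(n-1)^2$, as claimed. There is no substantial obstacle here beyond recognizing the arc bound: the entire force of the argument is the observation that EF1 limits the envy along any single arc to the value of one removable item, which additivity and normalization bound by one dollar; envy-freeability enters only to ensure that path weights are well-defined and that a heaviest path can be taken to be simple.
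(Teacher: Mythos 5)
Your proof is correct and follows essentially the same route as the paper's: EF1 caps every arc weight of the envy graph at one, a heaviest path uses at most $n-1$ arcs, and Observation~\ref{obs:useful} (equivalently, the fact that some agent needs no subsidy) converts the path bound into the $(n-1)^2$ total. The only difference worth noting is that you invoke additivity to extract the per-arc bound, whereas the paper later reuses this claim for monotone valuations, where the same bound $v_i(A_k)-v_i(A_k\setminus\{j\})\le 1$ follows directly from the normalization that marginal values are at most one.
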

\begin{proof}
Since there is an agent that requires no subsidy, it suffices to prove that the maximum path 
weight in the envy graph $G_\mathcal{A}$ is at most $n-1$.
But $\mathcal{A}$ is EF1. So agent $i$ envies agent $k$ by at most one, the maximum value of a single item. 
Thus every arc $(i,k)$ has weight at most one, that is, $w_{\mathcal{A}}(i,k) \leq 1$. The result follows as
any path contains at most $n-1$ arcs.
\end{proof}

Since we have shown that the output allocation $\mathcal{A}$ is both envy-freeable and EF1, 
it immediately follows by Claim~\ref{clm:envy-freeable-EF1} that it requires a total subsidy 
of at most $(n-1)^2$.

\section{The Subsidy Required Is at Most One per Agent} \label{s:bound}
In this section we complete our analysis of the additive setting. By the EF1 property of the output 
allocation $G_\mathcal{A}$ we have an upper bound of~$1$
on the weight of any arc in the envy graph $G_\mathcal{A}$. 
But this is insufficient to accomplish our goal of proving that the envy graph has maximum path weight~$1$.
How can we do this? As a thought experiment, imagine that, rather than an upper bound of~$1$ on each 
arc weight, we have a lower bound of~$-1$ on each arc weight. The subsequent lemma proves this would be a sufficient condition!

\begin{lem}\label{thm:one-dollar-less}
Let $\mathcal{A}$ be an envy-freeable allocation. If $w_{\mathcal{A}}(i,k) \ge -1$ for every arc $(i,k)$ in 
the envy graph then the maximum subsidy required is at most one per agent.
\end{lem}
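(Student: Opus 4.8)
The plan is to show that the hypothesis $w_{\mathcal{A}}(i,k) \ge -1$ forces the heaviest directed path in $G_{\mathcal{A}}$ to have weight at most one; the per-agent bound then follows immediately from the payment scheme used to prove Theorem~\ref{thm:hsthm1}. Recall that part $(c)\Rightarrow(a)$ of that theorem sets $p_i = \ell_{G_{\mathcal{A}}}(i)$, the maximum weight of a directed path starting at $i$, and verifies that this vector is envy-free. Since the empty path already certifies $\ell_{G_{\mathcal{A}}}(i) \ge 0$, it suffices to prove $\ell_{G_{\mathcal{A}}}(i) \le 1$ for every agent $i$, i.e.\ that every directed path in $G_{\mathcal{A}}$ has weight at most one. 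Each payment $p_i$ is then in $[0,1]$, which is exactly what the lemma asserts.

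The key observation is that a lower bound on arc weights converts into an upper bound on path weights by closing paths into cycles. Concretely, I would take any directed path $P$ starting at $i$ that realizes $\ell_{G_{\mathcal{A}}}(i)$; because $\mathcal{A}$ is envy-freeable there are no positive-weight directed cycles, so this maximum is attained by a \emph{simple} path. If $P$ is the empty path its weight is zero. Otherwise $P$ runs from $i$ to some vertex $b \ne i$, and appending the single arc $(b,i)$ closes $P$ into a directed cycle $C$. By Theorem~\ref{thm:hsthm1}(c) we have $w_{\mathcal{A}}(C) \le 0$, that is, $w_{\mathcal{A}}(P) + w_{\mathcal{A}}(b,i) \le 0$. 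Rearranging and invoking the hypothesis $w_{\mathcal{A}}(b,i) \ge -1$ gives $w_{\mathcal{A}}(P) \le -w_{\mathcal{A}}(b,i) \le 1$.

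This establishes $\ell_{G_{\mathcal{A}}}(i) \le 1$ for every $i$, so the payment vector $p_i = \ell_{G_{\mathcal{A}}}(i)$ is envy-free with every coordinate in $[0,1]$, completing the argument. I do not anticipate a genuine obstacle here: the only point requiring a word of care is that the maximum-weight path be simple, so that closing it with one return arc produces an honest cycle rather than a walk, and this is immediate from the absence of positive-weight cycles. The real content is the dual viewpoint---reading the $-1$ lower bound on arcs as a $+1$ upper bound on paths, mediated by the non-positivity of cycle weights guaranteed by envy-freeability---and once that reformulation is in hand the estimate is a single line.
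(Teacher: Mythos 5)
Your proof is correct and follows essentially the same route as the paper: close the maximum-weight path into a cycle, use the non-positivity of cycle weights from envy-freeability, and read the $-1$ lower bound on the closing arc as a $+1$ upper bound on the path, which bounds each payment $p_i = \ell_{G_{\mathcal{A}}}(i)$ by one. Your added remarks on the empty path and on simplicity of the maximizing path are minor refinements of the same argument.
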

\begin{proof}
By Theorem~\ref{thm:hsthm1}, as $\mathcal{A}$ is an envy-freeable the envy graph $G_\mathcal{A}$
contains no positive-weight cycles. Let $P$ be the maximum-weight path in $G_\mathcal{A}$.
Without loss of generality, $P=\{1,2,\dots, i\}$ with weight $p_1=\ell_{G_{\mathcal{A}}}(1)$.
Now take the directed cycle $C=P\cup (i,1)$. Because $C$ has non-positive weight and every arc weight is at least $-1$, 
we obtain
\begin{equation*}
0 \ \ge \ w_{\mathcal{A}}(C) \ =\  \ell_{G_{\mathcal{A}}}(1) + w_{\mathcal{A}}(i,1) \ \ge\ \ell_{G_{\mathcal{A}}}(1)-1.
\end{equation*}
Therefore  $\ell_{G_{\mathcal{A}}}(1)\le 1$ and the maximum subsidy is at most one.
\end{proof}

At first glance, Lemma~\ref{thm:one-dollar-less} seems of little use. We already know 
every arc in the envy graph has weight at most $1$. Suppose in addition that every arc weight was 
at least $-1$. That is, $1\ge w_{\mathcal{A}}(i,k) \ge -1$ for each arc $(i,k)$.
Consequently, $v_i(A_i) \le v_i(A_{k})+1$ and $v_i(A_k) \le v_i(A_{i})+1$.
In instances with a large number of valuable items this means that every agent is essentially indifferent over which
bundle in $\mathcal{A}$ they receive. It is unlikely that an allocation with this property even exists for 
every instance, and certainly not the case that our algorithm outputs such 
an allocation.

The trick is to apply Lemma~\ref{thm:one-dollar-less} to a modified fair division instance. 
In particular we construct, for each agent $i$, a modified valuation function $\bar{v}_i$ from $v_i$.
We then prove that the allocation $\mathcal{A}^{\mbs{v}}$ output for the original valuation profile
$\mbs{v}$ is envy-freeable even for the modified valuation profile $\mbs{\bar{v}}$.
Next we show that with this same allocation, every arc weight is at least $-1$ in the envy graph under the modified 
valuation profile $\mbs{\bar{v}}$. By Lemma~\ref{thm:one-dollar-less}, this implies that
the maximum subsidy required is at most one for the 
valuation profile $\mbs{\bar{v}}$. To complete the proof we show that the maximum subsidy 
required by each agent for the original valuation profile $\mbs{v}$ is at most the subsidy required for $\mbs{\bar{v}}$.

\subsection{A Modified Valuation Function}
Let $\mathcal{A}^{\mbs{v}} = \{A^{\mbs{v}}_1,\ldots,A^{\mbs{v}}_n\}$ be the allocation output by our algorithm under the original
valuation profile $\mbs{v}$. We now create the modified valuation profile $\mbs{\bar{v}}$. 
For each agent $i$, define $\bar{v}_i$ according to the rule:
\begin{align*}
    \bar{v}_i(\mu_i^t) &= v_i(\mu_i^t) &\forall t\le T\\
    \bar{v}_i(\mu_{k}^t) &= \max\left(v_i(\mu_{k}^t),v_i(\mu_i^{t+1})\right)  &\forall k\in I\setminus\{i\},\  \forall t\le T-1 \\
    \bar{v}_i(\mu_{k}^T) &= v_i(\mu_{k}^T) &\forall k\in I\setminus\{i\}.
    \end{align*}
That is, the value $\bar{v}_i(j)$ remains the same for any item $j\in A^{\mbs{v}}_i$ that was allocated to agent $i$ by the algorithm. 
For any other item $j$, the value $\bar{v}_i(j)$ is the maximum of the original value $v_i(j)$ and the value of the item 
allocated to $i$ by the algorithm in the round that immediately follows the round where $j$ was allocated to some agent.

The following two observations are trivial but will be useful.
\begin{observation}\label{obs:equal}
For any agent $i$ and item $j\in A^{\mbs{v}}_i$, we have $v_i(j) = \bar{v}_i(j)$. \qed
\end{observation}
\begin{observation}\label{obs:upper}
For any agent $i$ and item $j\notin A^{\mbs{v}}_i$, we have $v_i(j) \le \bar{v}_i(j)$. \qed
\end{observation}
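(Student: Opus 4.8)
The plan is to read the claim directly off the definition of the modified profile $\mbs{\bar{v}}$. First I would note that, since Algorithm~\ref{alg:allocation} allocates exactly one item to every agent in each of the $T$ rounds and these allocated items partition the ground set (together with any dummy items), every item $j \notin A^{\mbs{v}}_i$ must have been handed to some agent $k \neq i$ in some round $t$; that is, $j = \mu_k^t$ for a well-defined pair $(k,t)$ with $k \neq i$. This identification is the only bookkeeping step, and it is immediate from the structure of the algorithm, which is exactly why the modified value $\bar{v}_i(j)$ is defined on $j$ in the first place.

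With $j = \mu_k^t$ and $k \neq i$ fixed, I would then split on the round index. If $t \le T-1$, the defining rule gives $\bar{v}_i(j) = \bar{v}_i(\mu_k^t) = \max\!\left(v_i(\mu_k^t),\, v_i(\mu_i^{t+1})\right) \ge v_i(\mu_k^t) = v_i(j)$, where the inequality holds simply because a maximum dominates each of its arguments. If instead $t = T$, the rule sets $\bar{v}_i(j) = \bar{v}_i(\mu_k^T) = v_i(\mu_k^T) = v_i(j)$, so the bound holds with equality. In either case $v_i(j) \le \bar{v}_i(j)$, which is precisely the assertion of the observation.

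There is no genuine obstacle here: the modification only ever replaces an original value by the maximum of that value and something else, so on items outside agent $i$'s own bundle the value can only increase and never decrease. The sole point requiring any care is the boundary case of the final round $t = T$, where no ``next-round'' item $\mu_i^{t+1}$ exists and the value is consequently left unchanged; once that case is isolated, the bound is a one-line consequence of the $\max$ operation. Accordingly, I would present this as a two-sentence proof by cases rather than a displayed computation, mirroring the $\qed$-style brevity already used for Observation~\ref{obs:equal}.
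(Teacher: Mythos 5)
Your proof is correct and is exactly the argument the paper intends: the paper states Observation~\ref{obs:upper} without proof as an immediate consequence of the definition of $\mbs{\bar{v}}$, and your case split on $t \le T-1$ versus $t = T$ (maximum dominates its arguments, respectively equality) is precisely that consequence spelled out. Nothing is missing.
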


We will show the bound on the subsidy by a sequence of claims based on the proof plan outlined above.
First we show that $\mathcal{A}^{\mbs{v}}$ envy-freeable even under the modified valuation profile.
\begin{clm}\label{clm:allocation}
The allocation $\mathcal{A}^{\mbs{v}}$ output under the original valuation profile $\mbs{v}$ is an envy-freeable 
allocation under the modified valuation profile $\mbs{\bar{v}}$.
\end{clm}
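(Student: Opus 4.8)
The plan is to invoke Theorem~\ref{thm:hsthm1}: it suffices to show that the envy graph of $\mathcal{A}^{\mbs{v}}$, computed with respect to the modified profile $\mbs{\bar{v}}$, contains no positive-weight directed cycle. Since each $\bar{v}_i$ is additive and agrees with $v_i$ on agent $i$'s own bundle (Observation~\ref{obs:equal}), for any cycle $C=\{1,\dots,r\}$ with associated rotation $\pi_C$ I can decompose the cycle weight into a sum over rounds, exactly as in the proof of Lemma~\ref{lem:envyfreeable}:
\begin{align*}
w^{\mbs{\bar{v}}}_{\mathcal{A}^{\mbs{v}}}(C) \ =\ \sum_{t=1}^{T}\left(\sum_{i\in C}\bar{v}_i(\mu_{\pi_C(i)}^t)\ -\ \sum_{i\in C} v_i(\mu_i^t)\right).
\end{align*}
It is therefore enough to prove that each round's contribution $S_t := \sum_{i\in C}\bar{v}_i(\mu_{\pi_C(i)}^t) - \sum_{i\in C} v_i(\mu_i^t)$ is non-positive.

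For the final round $t=T$ the modification is inert, since $\bar{v}_i(\mu_k^T)=v_i(\mu_k^T)$. Hence $S_T\le 0$ is precisely the maximum-weight-matching argument already used in Lemma~\ref{lem:envyfreeable}: applying the rotation $\pi_C$ to the round-$T$ items produces a matching whose weight cannot exceed that of the optimal matching $M^T$. All the difficulty is concentrated in the rounds $t\le T-1$, where $\bar{v}_i(\mu_{\pi_C(i)}^t)=\max\!\big(v_i(\mu_{\pi_C(i)}^t),\,v_i(\mu_i^{t+1})\big)$, and the maximum blocks any direct appeal to optimality of $M^t$.

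To handle $t\le T-1$, I would split $C$ according to which term realizes the maximum: let $C_2$ collect the agents $i$ for which the next-round value $v_i(\mu_i^{t+1})$ strictly wins, and set $C_1=C\setminus C_2$. The crucial point is that the round-$(t+1)$ items are still available in round $t$, because $J_{t+1}\subseteq J_t$. I would therefore construct an alternative matching $\hat{M}$ on $H[I,J_t]$ that sends each $i\in C_1$ to $\mu_{\pi_C(i)}^t$, each $i\in C_2$ to $\mu_i^{t+1}$, and each $i\notin C$ to its own round-$t$ item $\mu_i^t$. The bound $w(\hat{M})\le w(M^t)=\sum_{i\in I}v_i(\mu_i^t)$, after cancelling the common terms indexed by $i\notin C$, gives exactly
\begin{align*}
\sum_{i\in C_1}v_i(\mu_{\pi_C(i)}^t)\ +\ \sum_{i\in C_2}v_i(\mu_i^{t+1})\ \le\ \sum_{i\in C}v_i(\mu_i^t),
\end{align*}
which is the statement $S_t\le 0$.

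The main obstacle I anticipate is the bookkeeping that confirms $\hat{M}$ is genuinely a matching, i.e.\ that the assigned items are pairwise distinct and all lie in $J_t$. The round-$t$ items $\{\mu_{\pi_C(i)}^t:i\in C_1\}$ are distinct since $\pi_C$ is injective, and they are disjoint from $\{\mu_i^t:i\notin C\}$ because $\pi_C(C_1)\subseteq C$; the round-$(t+1)$ items $\{\mu_i^{t+1}:i\in C_2\}$ are mutually distinct and, coming from a later round, disjoint from every round-$t$ item while still belonging to $J_t$. Once this is verified, optimality of $M^t$ closes each round, and summing the non-positive contributions over $t$ yields $w^{\mbs{\bar{v}}}_{\mathcal{A}^{\mbs{v}}}(C)\le 0$; by Theorem~\ref{thm:hsthm1}, $\mathcal{A}^{\mbs{v}}$ is envy-freeable under $\mbs{\bar{v}}$.
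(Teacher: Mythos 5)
Your proof is correct, and in the part that matters it takes a genuinely more direct route than the paper's. The setup is identical: both arguments use additivity and $\bar{v}_i(\mu^t_i)=v_i(\mu^t_i)$ to reduce the claim to a per-round inequality, and both dispose of round $T$ by noting the modification is inert there. For $t\le T-1$ the paper proceeds by a nested contradiction: it establishes the stronger statement that $M^t$ remains a maximum-weight matching in $H[I,J_t]$ under $\mbs{\bar{v}}$, by supposing otherwise, extracting a positive-weight cycle in the round-$t$ envy graph, decomposing that cycle into paths each terminating in a single ``red'' arc (one whose weight changed under the modification), selecting a positive-weight path, and rerouting only the agent at the tail of its red arc to its round-$(t+1)$ item. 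You instead fix the cycle $C$ and build, in one step, a hybrid matching that rotates the agents of $C_1$ along $\pi_C$ and sends every agent of $C_2$ to its own round-$(t+1)$ item; optimality of $M^t$ under the original $\mbs{v}$ then yields $S_t\le 0$ directly, with no intermediate contradiction and no path decomposition. Both arguments ultimately rest on the same two facts --- $J_{t+1}\subseteq J_t$, and the round-$(t+1)$ items are disjoint from the round-$t$ items --- and your feasibility bookkeeping for $\hat{M}$ (injectivity of $\pi_C$, $\pi_C(C_1)\subseteq C$, distinctness of the $\mu^{t+1}_i$) is complete. What your version buys is brevity: each round is handled by a single matching comparison. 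What the paper's version buys is the slightly stronger intermediate fact that each $M^t$ is still optimal under $\mbs{\bar{v}}$, though that extra strength is not used elsewhere.
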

\begin{proof}
By Theorem~\ref{thm:hsthm1}, to show that the allocation $\mathcal{A}^{\mbs{v}}$ is envy-freeable under the modified valuation profile $\mbs{\bar{v}}$ 
we must show that there is no positive-weight cycle in the envy graph using the modified values.
So suppose cycle $C$ has positive modified weight.
To obtain a contradiction, first observe that, in the allocation $\mathcal{A}^{\mbs{v}}$, 
agent $i$ receives the bundle $\mathcal{A}_i^{\mbs{v}} =\{\mu^1_i,\mu^2_i,\dots, \mu^T_i\}$.
Thus with respect to $\mbs{\bar{v}}$ the envy agent $i$ has for agent $k$ is
\begin{equation}\label{eq:modified-envy}
\bar{v}_i(\mathcal{A}_k^{\mbs{v}}) -\bar{v}_i(\mathcal{A}_i^{\mbs{v}}) 
\ = \ \sum_{t=1}^T \bar{v}_i(\mu^t_k) -  \sum_{t=1}^T \bar{v}_i(\mu^t_i) 
\ = \ \sum_{t=1}^T  \left( \bar{v}_i(\mu^t_k) -  \bar{v}_i(\mu^t_i) \right). 
 \end{equation}
 As the envy graph contains a positive-weight cycle $C$ we have, by (\ref{eq:modified-envy}), that
 \begin{align*}
0 < \sum_{(i,k)\in C} \bar{v}_i(\mathcal{A}_k^{\mbs{v}}) -\bar{v}_i(\mathcal{A}_i^{\mbs{v}}) 
\ = \ \sum_{(i,k)\in C}  \sum_{t=1}^T  \left( \bar{v}_i(\mu^t_k) -  \bar{v}_i(\mu^t_i) \right) 
\ = \  \sum_{t=1}^T \sum_{(i,k)\in C}  \left( \bar{v}_i(\mu^t_k) -  \bar{v}_i(\mu^t_i) \right). 
 \end{align*}
 This implies there exists a round $t$ such that
 \begin{equation}\label{eq:bad-cycle}
 \sum_{(i,k)\in C}  \bar{v}_i(\mu^t_k)  >   \sum_{(i,k)\in C} \bar{v}_i(\mu^t_i). 
 \end{equation}
 Now $M^t$ is a maximum-weight matching in $H[I, J_t]$ for the original valuation profile $\mbs{v}$.
 Let $\hat{M}^t$ be the matching formed from $M^t$ by permuting around the cycle $C$ the bundles of the agents in $C$.
 But then, by (\ref{eq:bad-cycle}), the matching $\hat{M}^t$ has greater weight in $H[I, J_t]$ than the matching $M^t$
 for the modified profile $\bar{\mbs{v}}$.
Consequently, we will obtain our contradiction if we can prove that $M^t$ is a maximum-weight matching in $H[I, J_t]$
even with respect to~$\mbs{\bar{v}}$.

This is true in the final round matching; clearly $M^T$ is a maximum-weight matching in 
$H[I, J_T]$ because, by definition, $\mbs{\bar{v}}$ and $\mbs{v}$ have the same value
for items in $J_T$.  Thus, it remains to prove the statement for each round $t \le T-1$.
Now $\mbs{\mu}^t=\{\mu_1^t, \mu_2^t,\ldots, \mu_n^t\}$ is the allocation of the items round $t$. Again, 
for a contradiction, assume that matching $M^t$ is not maximum in $H[I, J_t]$ for 
the valuation profile $\mbs{\bar{v}}$. Then, by Theorem~\ref{thm:hsthm1}, 
the envy graph $G_{\mbs{\mu^t}}$ contains a positive-weight directed cycle $C$. Without loss of generality, let $C = \{1,\ldots,r\}$.

We divide our analysis into two cases, depending on whether the weights on the arcs of $C$ change when the valuation 
profile is modified from $\mbs{v}$ to $\mbs{\bar{v}}$. Specifically, we call an arc $(i,i+1)$ of $C$ \textit{blue} 
if $\bar{v}_{i}(\mu_{i+1}^t) = v_{i}(\mu_{i+1}^t)$, that is, agent $i$'s value for the item allocated to agent $i+1$ 
does not change when the valuation profile is modified. We call an arc \textit{red} otherwise. Observe that if the
 arc $(i,i+1)$ of $C$ is red, then $\bar{v}_{i}(\mu_{i+1}^t) = v_{i}(\mu_{i}^{t+1}) > v_{i}(\mu_{i+1}^t)$, so in the 
 original valuation function agent $i$ strictly prefers the item that it is allocated in round $t+1$ to the item that agent $i+1$ is 
 allocated in round $t$. In turn, this implies that the weight on any red arc is necessarily negative. We have the following two cases to consider.

\begin{itemize}
    \item[(i)] \textit{Every arc of $C$ is blue.} Let $\pi_C$ be the permutation of $I$ under which $\pi_C(i) = i+1$ for 
    each $i\le r-1$, $\pi_C(r) = 1$, and $\pi_C(i) = i$ otherwise. The 
    matching $\mathcal{M}^t$ obtained by allocating to each agent $i$ the item $\mu_{\pi_C(i)}^t$ has greater weight than $M^t$ with 
    respect to the original valuation profile $\mbs{v}$, contradicting the assumption that the algorithm selected a matching of 
    maximum weight.
    \item[(ii)] \textit{$C$ contains a red arc.} In this case, $C$ can be decomposed into a sequence of $d$ directed 
    paths $P_1,\ldots,P_d$ such that each directed path consists of a (possibly empty) sequence of blue arcs followed 
    by exactly one red arc. Figure~\ref{fig:decomp} shows an example of such a decomposition. In the figure, blue arcs 
    are represented by solid lines and red arcs by dashed lines.
    \begin{figure}[ht]
        \centering
        \begin{tikzpicture}[scale=0.5]
            \begin{scope}[every node/.style={circle, fill=none, draw, inner sep=0pt,
            minimum size = 0.15cm
            }]
                
                \node[] (u0) at (0,0) {};
                \node[] (u1) at (2,1) {};
                \node[] (u2) at (3,3) {};
                \node[] (u3) at (2,5) {};
                \node[] (u4) at (0,6) {};
                \node[] (u5) at (-2,5) {};
                \node[] (u6) at (-3,3) {};
                \node[] (u7) at (-2,1) {};
                
                \node[] (v00) at (13,-1) {};
                \node[] (v10) at (15,0) {};
                \node[] (v20) at (16,2) {};
                
                \node[] (v21) at (15.5,4) {};
                \node[] (v31) at (14.5,6) {};
                \node[] (v41) at (12.5,7) {};
                \node[] (v51) at (10.5,6) {};
                
                \node[] (v52) at (8.5,5) {};
                \node[] (v62) at (7.5,3) {};
                \node[] (v72) at (8.5,1) {};
                
                \node[] (v73) at (9.5,0) {};
                \node[] (v03) at (11.5,-1) {};
            
            \end{scope}
    
            \begin{scope}[every edge/.style={draw=blue}]
                
                \path[thick,-{Latex[length=2mm,width=2mm]}] (u0) edge node[label={[label distance=-5]-75:$0.2$}] {} (u1);
                \path[thick,dashed,-{Latex[length=2mm,width=2mm]}] (u1) edge[draw=red] node[label={[label distance=-7]-15:$-0.3$}] {} (u2);
                \path[thick,-{Latex[length=2mm,width=2mm]}] (u2) edge node[label={[label distance=-7]15:$-0.1$}] {} (u3);
                \path[thick,-{Latex[length=2mm,width=2mm]}] (u3) edge node[label={[label distance=-5]75:$0.6$}] {} (u4);
                \path[thick,dashed,-{Latex[length=2mm,width=2mm]}] (u4) edge[draw=red] node[label={[label distance=-5]105:$-0.2$}] {} (u5);
                \path[thick,-{Latex[length=2mm,width=2mm]}] (u5) edge node[label={[label distance=-5]165:$0.4$}] {} (u6);
                \path[thick,dashed,-{Latex[length=2mm,width=2mm]}] (u6) edge[draw=red] node[label={[label distance=-5]195:$-0.4$}] {} (u7);
                \path[thick,dashed,-{Latex[length=2mm,width=2mm]}] (u7) edge[draw=red] node[label={[label distance=-5]255:$-0.1$}] {} (u0);
                
                \path[thick,-{Latex[length=2mm,width=2mm]}] (v00) edge node[label={[label distance=-5]-75:$0.2$}] {} (v10);
                \path[thick,dashed,-{Latex[length=2mm,width=2mm]}] (v10) edge[draw=red] node[label={[label distance=-7]-15:$-0.3$}] {} (v20);
                \path[thick,-{Latex[length=2mm,width=2mm]}] (v21) edge node[label={[label distance=-7]15:$-0.1$}] {} (v31);
                \path[thick,-{Latex[length=2mm,width=2mm]}] (v31) edge node[label={[label distance=-5]75:$0.6$}] {} (v41);
                \path[thick,dashed,-{Latex[length=2mm,width=2mm]}] (v41) edge[draw=red] node[label={[label distance=-5]105:$-0.2$}] {} (v51);
                \path[thick,-{Latex[length=2mm,width=2mm]}] (v52) edge node[label={[label distance=-5]165:$0.4$}] {} (v62);
                \path[thick,dashed,-{Latex[length=2mm,width=2mm]}] (v62) edge[draw=red] node[label={[label distance=-5]195:$-0.4$}] {} (v72);
                \path[thick,dashed,-{Latex[length=2mm,width=2mm]}] (v73) edge[draw=red] node[label={[label distance=-5]255:$-0.1$}] {} (v03);
                
            \end{scope}
    
            \begin{scope}[every node/.style={draw=none,rectangle}]
                
                \node (Clabel) at (0,3) {$C$};
                \node (P1label) at (14,1) {$P_1$};
                \node (P2label) at (12.5,5.5) {$P_2$};
                \node (P3label) at (9,3) {$P_3$};
                \node (P4label) at (11,0.5) {$P_4$};
                
            \end{scope}
        \end{tikzpicture}
        \caption{An example showing the decomposition of $C$ into directed paths $P_1,\ldots,P_4$. In this example, $P_2$ has positive weight.}
        \label{fig:decomp}
    \end{figure}
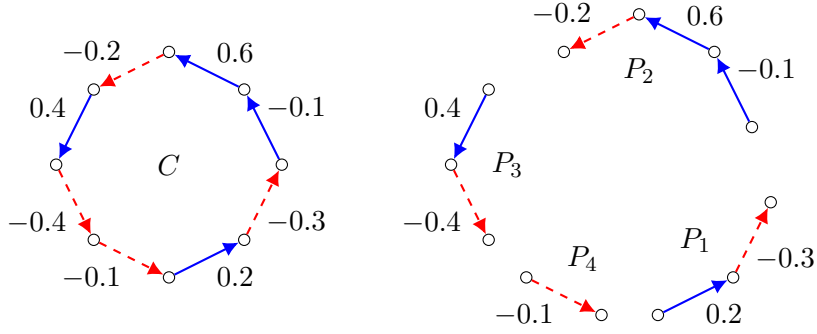
    
    Now, since $C$ has positive total weight, there is a directed path $P \in \{P_1,\ldots,P_d\}$ of positive total weight. 
    Without loss of generality, let $P = \{1,2,\dots,k+1\}$. Thus in the envy graph $G_{\mbs{\mu}^t}$ we have
        \begin{equation}\label{eq:positive-path}
        w_{\mbs{\mu}^t}(P) \ =\  \sum_{i=1}^{k} w_{\mbs{\mu}^t}(i,i+1) \ >\  0.
        \end{equation}
   
    Construct a matching $\mathcal{M}^t = \{(i,\omega_i^t)\}_{i\in I}$ in the following manner. 
    For each agent $i\ge k+1$, set $\omega^t_i=\mu^t_i$; that is, the end-vertex of the path $P$ and all agents not on $P$
    are matched to the same item in $\mathcal{M}^t$ as in $M^t$.
    For each agent $i\le k-1$, let $\omega^t_i=\mu^t_{i+1}$, that is in the allocation $\mathcal{M}^t$ agent $i$ receives the 
    item that agent $i+1$ receives in $M^t$. Finally, for agent $k$ let $\omega^t_k=M^{t+1}_k$; 
    that is, in $\mathcal{M}^t$ agent $k$ receives the 
    item it would have received in the next round in $M^{t+1}$.
  
  Observe that every item allocated by $\mathcal{M}^t$ was available for allocation in round $t$ and, thus,
  it was a feasible allocation to select in round $t$.
  Next let's compare the relative values of $\mathcal{M}^t$ and $M^t$ under the original valuations $\mbs{v}$.
  To do this, observe that by definition of $\mathcal{M}^t$ we have
    \begin{align}\label{eq:mathcal-m}
  v(\mathcal{M}^t)-v(M^t) &=  \sum_{i=1}^k \left( v_i(\omega^t_i)-v_i(\mu^t_i)\right) \nonumber \\
  &=  \sum_{i=1}^{k-1} \left( v_i(\omega^t_i)-v_i(\mu^t_i)\right) + \left( v_k(\omega^t_k)-v_k(\mu^t_k)\right) \nonumber \\
    &=  \sum_{i=1}^{k-1} \left( v_i(\mu^t_{i+1})-v_i(\mu^t_i)\right) + \left( v_k(\mu^{t+1}_k)-v_k(\mu^t_k)\right).
               \end{align}
But $(k,k+1)$ is a red arc in $G_{\mbs{\mu}^t}$. Therefore, it must be the case that
$v_{k}(\mu^{t+1}_{k}) > v_{k}(\mu^t_{k+1})$. Plugging this into (\ref{eq:mathcal-m}) gives
    \begin{align}\label{eq:mathcal-m2}
  v(\mathcal{M}^t)-v(M^t)      &\ >\   \sum_{i=1}^{k-1} \left( v_i(\mu^t_{i+1})-v_i(\mu^t_i)\right) + \left( v_k(\mu^{t}_{k+1})-v_k(\mu^t_k)\right)  \nonumber \\
         &\ =\   \sum_{i=1}^{k} \left( v_i(\mu^t_{i+1})-v_i(\mu^t_i)\right).
           \end{align}
           But, by definition, $w_{\mbs{\mu}^t}(i,i+1)=v_i(\mu^t_{i+1})-v_i(\mu^t_i)$. So, together (\ref{eq:positive-path}) and (\ref{eq:mathcal-m2}) imply 
      \begin{equation}
  v(\mathcal{M}^t)-v(M^t) 
     \ >\  \sum_{i=1}^{k} w_{\mbs{\mu}^t}(i,i+1) \ > \ 0.
  \end{equation}
 
Thus $\mathcal{M}^t$ has greater weight than $M^t$ under the original valuations $\mbs{v}$.
  This contradicts the optimality of $M^t$.  \qedhere
\end{itemize}
\end{proof}

Claim~\ref{clm:allocation} shows that the allocation $\mathcal{A}^{\mbs{v}}$ produced by the algorithm on the original 
instance is an envy-freeable allocation in the modified instance. 
We next show that for this modified 
valuation profile the subsidy required is at most $1$ for each agent. In particular the
total subsidy is at most $n-1$.
\begin{clm}\label{clm:subsidybound}
For the envy-freeable allocation $\mathcal{A}^{\mbs{v}}$ the subsidy to each agent is at most $1$
for the modified valuation profile $\mbs{\bar{v}}$.
\end{clm}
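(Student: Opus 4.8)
The plan is to combine Claim~\ref{clm:allocation} with Lemma~\ref{thm:one-dollar-less}. Claim~\ref{clm:allocation} already establishes that $\mathcal{A}^{\mbs{v}}$ is envy-freeable under the modified profile $\mbs{\bar{v}}$, so the hypothesis of Lemma~\ref{thm:one-dollar-less} concerning the absence of positive-weight cycles is met. By that lemma, it then suffices to verify its remaining hypothesis: that every arc in the envy graph of $\mathcal{A}^{\mbs{v}}$ under $\mbs{\bar{v}}$ has weight at least $-1$. Concretely, I would show $\bar{v}_i(A^{\mbs{v}}_k) - \bar{v}_i(A^{\mbs{v}}_i) \geq -1$ for every ordered pair of agents $(i,k)$.

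First I would expand this arc weight round by round. Using $A^{\mbs{v}}_i = \{\mu_i^1,\dots,\mu_i^T\}$ and $A^{\mbs{v}}_k = \{\mu_k^1,\dots,\mu_k^T\}$, additivity of $\bar{v}_i$ gives
\begin{equation*}
\bar{v}_i(A^{\mbs{v}}_k) - \bar{v}_i(A^{\mbs{v}}_i) \ =\ \sum_{t=1}^T \bar{v}_i(\mu^t_k) - \sum_{t=1}^T \bar{v}_i(\mu^t_i).
\end{equation*}
By Observation~\ref{obs:equal} we have $\bar{v}_i(\mu^t_i) = v_i(\mu^t_i)$ for every $t$, so the subtracted sum is just $\sum_{t=1}^T v_i(\mu^t_i)$. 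The crucial step is to exploit the defining \emph{forward-shift} property of the modified valuation: for every $t \le T-1$,
\begin{equation*}
\bar{v}_i(\mu^t_k) \ =\ \max\!\left(v_i(\mu^t_k),\, v_i(\mu^{t+1}_i)\right) \ \geq\ v_i(\mu^{t+1}_i),
\end{equation*}
while for the final round $\bar{v}_i(\mu^T_k) = v_i(\mu^T_k) \geq 0$ by monotonicity.

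Summing these per-round lower bounds telescopes neatly: $\sum_{t=1}^T \bar{v}_i(\mu^t_k) \geq \sum_{t=1}^{T-1} v_i(\mu^{t+1}_i) = \sum_{t=2}^T v_i(\mu^t_i)$. Subtracting $\sum_{t=1}^T v_i(\mu^t_i)$ then cancels every round except the first, yielding
\begin{equation*}
\bar{v}_i(A^{\mbs{v}}_k) - \bar{v}_i(A^{\mbs{v}}_i) \ \geq\ -\,v_i(\mu^1_i) \ \geq\ -1,
\end{equation*}
where the last inequality is the normalization that each item is worth at most one dollar to any agent. Thus every arc weight in the modified envy graph is at least $-1$, and Lemma~\ref{thm:one-dollar-less} immediately yields that the subsidy to each agent is at most one under $\mbs{\bar{v}}$.

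The only genuine subtlety—and precisely the reason $\bar{v}$ was engineered this way—is the forward-shift inequality $\bar{v}_i(\mu^t_k) \geq v_i(\mu^{t+1}_i)$, which is exactly what makes the summed lower bound telescope so that all contributions cancel except the single first-round item $\mu_i^1$. Everything else is routine bookkeeping of the round indices, and I expect no obstacle beyond tracking those indices correctly in the telescoping sum.
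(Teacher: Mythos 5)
Your proposal is correct and follows essentially the same route as the paper: both establish the arc-weight lower bound $\bar{w}_{\mathcal{A}^{\mbs{v}}}(i,k) \ge -1$ by using the forward-shift inequality $\bar{v}_i(\mu^t_k) \ge v_i(\mu^{t+1}_i)$ so that the sums telescope down to $-v_i(\mu^1_i) \ge -1$, and then invoke Claim~\ref{clm:allocation} together with Lemma~\ref{thm:one-dollar-less}. No gaps.
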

\begin{proof}
Take the valuation profile $\mbs{\bar{v}}$ and the allocation $\mathcal{A}^{\mbs{v}} = \{A^{\mbs{v}}_1,\ldots,A^{\mbs{v}}_n\}$.
We claim that for any arc $(i, k)$ its modified weight $\bar{w}_{\mbs{A^{v}}}(i,k)$ in the the envy graph is \textit{at least} $-1$. 
To prove this take any pair of agents $i$ and $k$. Then
\begin{align}\label{eq:modified-weights}
\bar{w}_{\mbs{A^{v}}}(i,k)  
&\ =\  \bar{v}_i(A^{\mbs{v}}_{k})-\bar{v}_i(A^{\mbs{v}}_i) \nonumber \\
&\ =\  \sum_{t=1}^T  \bar{v}_i(\mu^t_{k}) - \sum_{t=1}^T \bar{v}_i(\mu^t_i) \nonumber\\
&\ =\  \sum_{t=1}^T \bar{v}_i(\mu^t_{k})- \sum_{t=1}^T v_i(\mu^t_i)  \nonumber\\
&\ =\  \sum_{t=1}^{T-1}  \max (v_i(\mu^t_{k}),v_i(\mu^{t+1}_{i})) + v_i(\mu^T_i)-\sum_{t=1}^T v_i(\mu^t_i) \nonumber\\
&\ \ge\ \sum_{t=1}^{T-1}  v_i(\mu^{t+1}_{i})- \sum_{t=1}^{T-1} v_i(\mu^t_i).
\end{align}
We can simplify (\ref{eq:modified-weights}) and lower bound it via a telescoping sum:
\begin{align}\label{eq:minus-one}
\bar{w}_{\mbs{A^{\mbs{v}}}}(i,k)  
&\ \ge\  \sum_{t=1}^{T-1} \left( v_i(\mu^{t+1}_i) -  v_i(\mu^{t}_{i}) \right) \nonumber \\
&\ =\    v_i(\mu^{T}_{i}) - v_i(\mu^1_i)   \nonumber  \\
&\ \ge\  - v_i(\mu^1_i) \nonumber  \\
&\ \ge\  -1.
 \end{align}
Now by Claim~\ref{clm:allocation}, the allocation $\mathcal{A}^{\mbs{v}}$ is envy-freeable with respect to the valuations $\mbs{\bar{v}}$.
Applying Lemma~\ref{thm:one-dollar-less}, because the arc weights are lower bounded by $-1$ the subsidy
required per agent is then at most one for the modified valuation profile $\mbs{\bar{v}}$. 

Finally, since there is an agent whose payment is 0, the total subsidy required is upper bounded by $n-1$.
\end{proof}

The following claim shows that, for any agent, the subsidy for the original valuation profile is at most 
the subsidy required for the modified valuation function.

\begin{clm}\label{clm:subsidy}
For the allocation $\mathcal{A}^{\mbs{v}}$ the subsidy required by an agent
given valuation profile $\mbs{v}$ is at most the subsidy required 
given valuation profile $\bar{\mbs{v}}$. 
\end{clm}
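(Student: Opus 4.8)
The plan is to reduce the statement to a monotonicity property of the heaviest-path weights under the passage from $\mbs{v}$ to $\bar{\mbs{v}}$. First I would recall that, for an envy-freeable allocation, the minimum subsidy that agent $i$ requires equals the heaviest-path weight starting at $i$ in the envy graph: the construction in the proof of Theorem~\ref{thm:hsthm1} $(c)\Rightarrow(a)$ sets $p_i = \ell_{G_{\mathcal{A}}}(i)$, and (as noted immediately after that theorem) these path weights simultaneously lower bound the payment to each agent in \emph{every} envy-free payment vector. Since $\mathcal{A}^{\mbs{v}}$ is envy-freeable under $\mbs{v}$ by Lemma~\ref{lem:envyfreeable} and under $\bar{\mbs{v}}$ by Claim~\ref{clm:allocation}, neither envy graph contains a positive-weight cycle, so both heaviest-path weights are well-defined and finite. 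Thus it suffices to show that, for every agent $i$, the heaviest-path weight from $i$ does not decrease when we pass from the original envy graph to the modified one.

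The key step is a pointwise comparison of the arc weights. Both envy graphs are built on the \emph{same} allocation $\mathcal{A}^{\mbs{v}}$, so they share the same vertex set and the same collection of directed paths; only the arc weights differ. Fix an arc $(i,k)$ with $k\neq i$. By Observation~\ref{obs:equal}, every item in agent $i$'s own bundle $A^{\mbs{v}}_i$ has the same value under $\mbs{v}$ and $\bar{\mbs{v}}$, so $\bar{v}_i(A^{\mbs{v}}_i) = v_i(A^{\mbs{v}}_i)$. By Observation~\ref{obs:upper}, every item in $A^{\mbs{v}}_k$ lies outside $A^{\mbs{v}}_i$ and hence has weakly larger modified value, so $\bar{v}_i(A^{\mbs{v}}_k) \ge v_i(A^{\mbs{v}}_k)$. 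Subtracting, the modified arc weight $\bar{v}_i(A^{\mbs{v}}_k)-\bar{v}_i(A^{\mbs{v}}_i)$ is at least the original arc weight $v_i(A^{\mbs{v}}_k)-v_i(A^{\mbs{v}}_i) = w_{\mathcal{A}^{\mbs{v}}}(i,k)$ for every arc of the graph.

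Finally, because each individual arc weight weakly increases, the total weight of any fixed directed path starting at $i$ weakly increases as well; taking the maximum over all such paths, the heaviest-path weight from $i$ in the modified envy graph is at least its value in the original envy graph. Identifying these heaviest-path weights with the required per-agent subsidies then yields the claim: the subsidy required by agent $i$ under $\mbs{v}$ is at most the subsidy required under $\bar{\mbs{v}}$. I do not anticipate a genuine obstacle here; the only points that demand care are the well-definedness of the heaviest-path weights (guaranteed by the absence of positive-weight cycles in both graphs) and the observation that the two profiles share the allocation $\mathcal{A}^{\mbs{v}}$, which is precisely what makes it legitimate to compare the same path in both envy graphs arc by arc.
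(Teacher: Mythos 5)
Your proof is correct and follows essentially the same route as the paper: you compare arc weights pointwise using Observations~\ref{obs:equal} and~\ref{obs:upper}, conclude that every path weight weakly increases from $\mbs{v}$ to $\bar{\mbs{v}}$, and identify the required per-agent subsidy with the heaviest-path weight. The only difference is that you spell out the heaviest-path/subsidy identification and well-definedness more explicitly than the paper does, which is fine.
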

\begin{proof}
By Observation~\ref{obs:equal}, $v_i(j) = \bar{v}_i(j)$ for any $j\in A^{\mbs{v}}_i$.
Therefore, by additivity, 
\begin{equation}\label{eqn:vbar-equal}
\bar{v}_i(A^{\mbs{v}}_i) \ =\  \sum_{j\in A^{\mbs{v}}_i} \bar{v}_i(j) \ =\  \sum_{j\in A^{\mbs{v}}_i} v_i(j) \ =\ v_i(A^{\mbs{v}}_i).
\end{equation}
On the other hand, Observation~\ref{obs:upper} states that $v_i(j) \le \bar{v}_i(j)$ for any $j\notin A^{\mbs{v}}_i$.
Thus, for any pair $i$ and $k$ of agents, we have 
\begin{equation}\label{eqn:vbar-upper}
    \bar{v}_i(A^{\mbs{v}}_{k})  \ =\ \sum_{j\in A^{\mbs{v}}_{k}}\bar{v}_i(j) 
     \ \geq \ \sum_{j\in A^{\mbs{v}}_{k}}v_i(j) 
    \ =\  v_i(A^{\mbs{v}}_{k}).
\end{equation}
Combining (\ref{eqn:vbar-equal}) and (\ref{eqn:vbar-upper}) gives 
\begin{align*}
\bar{w}_{\mbs{A^{v}}}(i,k)  \ =\ \bar{v}_i(A^{\mbs{v}}_k)-\bar{v}_i(A^{\mbs{v}}_{i}) 
\ \ge\ v_i(A^{\mbs{v}}_{k})-v_i(A^{\mbs{v}}_{i}) \ =\ w_{\mathcal{A}^{\mbs{v}}}(i,k). \end{align*}
Consequently, the weight of any arc $(i,k)$ in the envy graph with the modified valuation profile is at least
its weight with the original valuation profile. Therefore the weight of any path in the envy graph is
higher with the modified valuation profile than with the original valuation profile.
The claim follows.
\end{proof}

Together Claims~\ref{clm:subsidybound} and~\ref{clm:subsidy} give our main result.

\newtheorem*{thm:main}{Theorem \ref{thm:main}}
\begin{thm:main}
For additive valuations there is 
an envy-freeable allocation where the subsidy to each agent is at most one dollar.
(This allocation is also EF1, balanced, and can be computed in polynomial time.)
\end{thm:main}

\section{Bounding the Subsidy for Monotone Valuations} \label{s:monotone}

We now consider the much more general setting where the valuations of the agents 
are arbitrary monotone functions. That is, the only assumptions we impose are that $v_i(S) \leq v_i(T)$ 
when $S\subseteq T$ and the basic assumption that $v_i(\emptyset) = 0$. 
Without loss of generality, we may scale the valuations so that 
the marginal value of each item for any agent never exceeds one dollar. Our goal in this section is to show that 
there is an envy-freeable allocation in which the total subsidy required for envy-freeness is at most $2(n-1)^2$.
In particular, the total subsidy required is independent of of the number of items~$m$. 
When $m > 2(n-1)$ this bound beats the bound $(n-1)\cdot m$ of \cite{HS19} for additive valuations
 described in Theorem~\ref{thm:hsthmub}\ and, more importantly, it applies to the far more general 
class of arbitrary monotone valuations.

Our method to compute the desired envy-freeable allocation begins with finding an EF1 allocation. 
The well-known \textit{envy-cycles} algorithm of \citet{LMM04} finds such an allocation in polynomial time 
given oracle access to the valuations, under the same mild conditions on the valuations. For completeness, we briefly 
describe the envy-cycles algorithm. The algorithm proceeds in a sequence of $m$ rounds, allocating one item in 
each round. At any point during the algorithm, we denote by $G$ the envy graph corresponding to the current 
allocation, and by $H$ the subgraph of $G$ that consists of all the agents and only the arcs that have positive weight, that is,
positive envy. We call $H$ the {\em auxiliary graph} of $G$. The algorithm relies on the following lemma.
\begin{lem}\label{lem:envy-cycles}
\cite{LMM04} For any partial allocation $\mathcal{A}$ with auxiliary graph $H$, there is another partial 
allocation $\mathcal{A}'$ with auxiliary graph $H'$ such that
\vspace{-.25cm}
\begin{itemize}[noitemsep]
    \item $H'$ is acyclic.
    \item For each agent $i$, the maximum weight of an outgoing arc from $i$ is less in $\mathcal{A}'$ than in $\mathcal{A}$.
\end{itemize}
\end{lem}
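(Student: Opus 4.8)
The plan is to prove the lemma constructively by repeatedly \emph{resolving} directed cycles in the auxiliary graph, each resolution being a cyclic reassignment of bundles along a positive-envy cycle. Concretely, if $i_1 \to i_2 \to \cdots \to i_r \to i_1$ is a directed cycle in $H$, then by definition every arc has positive weight, so each agent $i_j$ strictly envies the bundle currently held by $i_{j+1}$ (indices modulo $r$). I would reassign bundles so that each $i_j$ receives the bundle previously held by $i_{j+1}$, leaving all agents off the cycle untouched, and call the resulting partial allocation $\mathcal{A}''$. Two structural observations drive everything: the reassignment is merely a permutation of the existing bundles among the agents, so the \emph{multiset} of bundles is unchanged; and each agent $i_j$ on the cycle now holds a bundle it strictly preferred, so $v_{i_j}(A_{i_j})$ strictly increases while every other agent's own-bundle value is unchanged.

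These two facts give termination via a potential argument. The utilitarian welfare $\sum_i v_i(A_i)$ strictly increases with each resolution, and since every reachable allocation is a reassignment of one fixed finite collection of bundles, there are at most $n!$ possibilities; strictly increasing welfare forbids revisiting any allocation, so after finitely many resolutions we reach an allocation $\mathcal{A}'$ whose auxiliary graph $H'$ contains no directed cycle. This yields the first bullet.

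For the second bullet I would track, for each agent $i$, the largest outgoing arc weight $\max_{k\neq i} v_i(A_k) - v_i(A_i)$ in the envy graph. Because the collection of bundles is preserved throughout, the quantity $M_i := \max_k v_i(A_k)$ is a constant of the process. Any agent $i$ lying on a resolved cycle strictly envies some other bundle, and therefore cannot be holding its own most-valued bundle; consequently its largest outgoing arc weight equals $M_i - v_i(A_i)$, which strictly drops as $v_i(A_i)$ increases. For every agent off the resolved cycle, both its bundle and the multiset of other bundles are unchanged, so its largest outgoing arc weight is unaffected. Hence the largest outgoing arc weight from each agent never increases across a single resolution, and composing the inequalities along the entire chain of resolutions from $\mathcal{A}$ to $\mathcal{A}'$ gives the second conclusion for every agent simultaneously.

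The main obstacle is not the high-level idea but making the second bullet airtight: one must argue that the bound holds \emph{simultaneously for all agents}, which requires the observation that an agent with positive outgoing envy (hence one that participates in a cycle) cannot be holding its globally most-valued bundle, so that its largest outgoing arc weight genuinely equals $M_i - v_i(A_i)$ rather than merely being bounded by it. Combined with the invariance of $M_i$ under cyclic reassignment and the monotone increase of $v_i(A_i)$, this is exactly the point where the positivity of the resolved arc weights is used, and it is the step that deserves the most careful bookkeeping.
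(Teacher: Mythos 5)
Your proof is correct and follows essentially the same approach the paper (following \citet{LMM04}) sketches: repeatedly rotate bundles around a positive-envy cycle, observe that the multiset of bundles is invariant so each agent's maximum value over all bundles is a constant, and conclude that each cycle agent's own value strictly rises while no agent's maximum outgoing envy increases. The only difference is cosmetic: you certify termination via strictly increasing utilitarian welfare over the at most $n!$ reassignments of the fixed bundles, whereas the paper's description uses the strictly decreasing number of arcs in the auxiliary graph; both potentials work.
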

The basic idea of the algorithm then is to maintain the following two invariants: (i) at each step, 
the partial allocation is EF1, and (ii) at the start and end of each round, the 
auxiliary graph $H$ is acyclic. Since the auxiliary graph is a directed acyclic graph at the start of each round, it has a 
source vertex. The algorithm simply chooses this vertex and allocates the next item to the corresponding agent. 
Because no other agent envies this agent before this item is allocated, the envies of the other agents are bounded 
by the value of this item (so the allocation of this item maintains the EF1 invariant). Next, the algorithm identifies 
a directed cycle (if one exists) in the auxiliary graph $H$ and redistributes bundles by rotating them around this cycle. 
It is easy to see that the EF1 guarantee is maintained after this redistribution of the bundles, and that the number of 
arcs in $H$ strictly decreases. All cycles in $H$ are then eliminated in sequence until $H$ is acyclic 
and the round ends. When all items have been allocated, the final allocation is EF1.

This immediately raises the question of whether the resulting allocation is envy-freeable. By 
Claim~\ref{clm:envy-freeable-EF1}, we know that if an allocation is both envy-freeable and EF1, then 
the total subsidy required for envy-freeness is $(n-1)^2$, since the weight of any path is at most $n-1$. 
Unfortunately, it is possible that the allocation output by the envy-cycles algorithm is not envy-freeable. 
However, we show that an EF1 allocation can still be used to produce an envy-freeable allocation that requires 
only a small increase in the subsidy! Specifically, the following key lemma shows that if we begin by fixing the 
bundles of an EF1 allocation and then redistribute these bundles to produce an envy-freeable allocation, the weight 
of any path increases to at most $2(n-1)$. By Theorem \ref{thm:hsthm1}, an envy-freeable allocation can be found 
by computing a maximum-weight matching.

\begin{lem} \label{lem:redistribute}
Let $\mathcal{A}$ be an EF1 allocation, and $\mathcal{B}$ be the envy-freeable allocation corresponding to a 
maximum-weight matching between the agents and the bundles of $\mathcal{A}$. Then $\mathcal{B}$ can be 
made envy-free with a subsidy of at most $2(n-1)$ to each agent.
\end{lem}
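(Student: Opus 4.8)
The plan is to bound the maximum weight of a directed path in the envy graph $G_{\mathcal{B}}$, since by Observation~\ref{obs:useful} this heaviest-path weight upper bounds the subsidy paid to each agent, and $\mathcal{B}$ is envy-freeable: being a maximum-weight reassignment of the bundles it maximizes welfare over reassignments, so Theorem~\ref{thm:hsthm1} gives that $G_{\mathcal{B}}$ has no positive cycle. Write $A_i$ for agent $i$'s bundle in $\mathcal{A}$, set $a_i = v_i(A_i)$, let $\sigma$ be the permutation with $B_i = A_{\sigma(i)}$, and set $b_i = v_i(B_i)$. First I would record the two facts I will lean on. The EF1 property of $\mathcal{A}$, combined with the normalization that every item has marginal value at most one, gives that for every pair $i,k$ we have $v_i(A_k) \le v_i(A_i) + 1 = a_i + 1$; in words, no bundle is worth more than a dollar above agent $i$'s own $\mathcal{A}$-bundle. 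Second, since $\mathcal{B}$ is a maximum-weight matching of the same bundles and the identity assignment (which is exactly $\mathcal{A}$) is one feasible matching, the welfare inequality $\sum_{i\in I} b_i \ge \sum_{i\in I} a_i$ holds.

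Next I would take the heaviest directed path $P = u_0 \to u_1 \to \cdots \to u_L$ in $G_{\mathcal{B}}$ (with $L \le n-1$) and expand its weight as $\ell(P) = \sum_{t=0}^{L-1} (v_{u_t}(B_{u_{t+1}}) - b_{u_t})$. Bounding each forward term by the EF1 inequality above, $v_{u_t}(B_{u_{t+1}}) \le a_{u_t} + 1$, yields $\ell(P) \le L - \sum_{i\in S}(b_i - a_i)$, where $S = \{u_0,\dots,u_{L-1}\}$ is the set of the first $L$ vertices of the path. The remaining task is therefore to lower bound the partial welfare difference $\sum_{i\in S}(b_i - a_i)$.

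The hard part will be precisely this partial sum: individual arcs of $G_{\mathcal{B}}$ can be much heavier than one, because reassigning the EF1 bundles may leave some agent valuing its new bundle far below its own $\mathcal{A}$-bundle, so a per-arc bound of the kind used in Claim~\ref{clm:envy-freeable-EF1} is unavailable and I cannot treat $P$ arc by arc. The trick I would use is to convert the partial sum over $S$ into the global welfare inequality. Since $\sum_{i\in I}(b_i - a_i) \ge 0$ and each off-path agent satisfies $b_i - a_i \le 1$ (again by the EF1 bound, applied as $v_i(A_{\sigma(i)}) \le a_i + 1$), I obtain $\sum_{i\in S}(b_i - a_i) \ge -\sum_{i\notin S}(b_i - a_i) \ge -(n-L)$. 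Substituting back gives $\ell(P) \le L + (n-L) = n$, which is at most $2(n-1)$ for $n \ge 2$ (the case $n=1$ being vacuous). Combined with the envy-freeability of $\mathcal{B}$ and Observation~\ref{obs:useful}, this shows that each agent requires a subsidy of at most $2(n-1)$, as claimed; in fact the argument establishes the sharper bound $n$.
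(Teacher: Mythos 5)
Your proof is correct and rests on the same decomposition as the paper's: each arc weight along the path is split as $\left(v_i(B_{i+1}) - v_i(A_i)\right) + \left(v_i(A_i) - v_i(B_i)\right)$, the first summand is bounded by $1$ via EF1 (valid for monotone valuations because the marginal value of the removable item is at most one), and the aggregated second summand is controlled using the fact that the reassignment maximizes welfare over permutations of the fixed bundles. Where you diverge is in the bookkeeping for that second sum. The paper bounds $\sum_i \left(v_i(A_i)-v_i(B_i)\right)$ over the first $r-1$ path vertices by $n-1$, by comparing the total loss of the agents who lose value under the reassignment to the total gain of those who gain and bounding the latter by $(n-1)\cdot 1$; added to the $n-1$ coming from the first summands, this gives $2(n-1)$. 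You instead note that this sum is the negative of the welfare change restricted to the first $L$ path vertices, which is at least $-(n-L)$ because the global welfare change is nonnegative and each of the $n-L$ off-path agents gains at most $1$ by EF1. The two contributions then telescope to $L+(n-L)=n$, so your accounting actually yields the sharper per-agent bound of $n$ dollars rather than $2(n-1)$ (the bounds coincide at $n=2$), a genuine if modest improvement that the paper does not record. Every step checks out; the only cosmetic quibble is that Observation~\ref{obs:useful} as stated bounds the \emph{total} subsidy, whereas the per-agent statement you invoke is really the fact, established in the proof of Theorem~\ref{thm:hsthm1}, that the payments $p_i=\ell_{G_{\mathcal{B}}}(i)$ eliminate envy and are each at most the heaviest path weight.
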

\begin{proof}
Let $\mathcal{A} = \{A_1,\ldots,A_n\}$ be an EF1 allocation. So, for any pair $i$ and $k$ of 
agents, $v_i(A_k) - v_i(A_i) \leq 1$. Let $\pi$ be a permutation of the bundles that 
maximizes $\sum_i v_i(A_{\pi(i)})$. Then, by Theorem~\ref{thm:hsthm1}, the 
allocation $\mathcal{B} = \{B_{1},\ldots,B_{n}\} = \{A_{\pi(1)},\ldots,A_{\pi(n)}\}$ is envy-freeable. 
Next, let $P$ be a directed path in the envy graph $G_{\mathcal{B}}$. Without loss of generality, $P=\{1,2,\dots,r\}$ for 
some $r\ge 2$. Our goal is to show that the weight of $P$ in $G_{\mathcal{B}}$ is at most $2(n-1)$. Clearly, the 
weight of $P$ in $G_{\mathcal{A}}$ is at most $n-1$. Consider an arc $(i,i+1)$ of $P$. Since $\mathcal{A}$ is EF1, for 
any agent $k$, we have $v_i(A_k) - v_i(A_i) \leq 1$. Now, agent $i+1$ receives the bundle of 
agent $\pi(i+1)$ in the redistributed allocation $\mathcal{B}$. We have $v_i(A_{\pi(i+1)}) - v_i(A_i) \leq 1$
and, thus, $v_i(B_{i+1}) - v_i(A_i) \leq 1$. It follows that:
\begin{align}\label{eqn:x}
    w_{\mathcal{B}}(P) &= \sum_{(i,k)\in P} w_{\mathcal{B}}(i,k) & \nonumber \\
    &= \sum_{i=1}^{r-1} \left(v_i(B_{i+1}) - v_i(B_i)\right) & \nonumber \\
    &= \sum_{i=1}^{r-1} \left(v_i(B_{i+1}) - v_i(A_i) + v_i(A_i) - v_i(B_i)\right) &\nonumber \\
    &\le \sum_{i=1}^{r-1} \left(1 + v_i(A_i) - v_i(B_i)\right) &\nonumber \\
    &\leq (n-1) + \sum_{i=1}^{r-1} \left(v_i(A_i) - v_i(B_i)\right)
\end{align}
To complete the proof, it remains to show that $\sum_{i=1}^{r-1} \left(v_i(A_i) - v_i(B_i)\right)$ is at most $n-1$. Together with (\ref{eqn:x}), this implies that $w_{\mathcal{B}}(P) \leq 2(n-1)$.

Since $\pi$ maximizes $\sum_i v_i(A_{\pi(i)})$, we have $\sum_i v_i(B_i) \geq \sum_i v_i(A_i)$. The key observation is 
that, while the sum of values of the bundles received by all agents increases when we redistribute the bundles 
from $\mathcal{A}$ to $\mathcal{B}$, the value of the bundle received by any single agent 
increases by {\em at most} one because $\mathcal{A}$ is EF1. This then constrains the 
amount by which the total value for any subset of agents 
can \textit{decrease}. Specifically, let $R \subseteq I$ be the set of agents $i$ that 
receive a bundle $B_i$ of smaller value than $A_i$, that is, $R = \{i\in I:v_i(B_i) < v_i(A_i)\}$. 
Let $S = I\setminus R$, so $S = \{i\in I:v_i(B_i) \geq v_i(A_i)\}$.

Now, we have two cases to consider.
\begin{itemize}
    \item[(i)] \textit{$|R| = 0$.}
    
    Then $\sum_{i=1}^{r-1} \left(v_i(A_i) - v_i(B_i)\right) \leq 0$ and the result follows.
    
    \item[(ii)] \textit{$|R|\geq 1$.}
    
    Then $|S| \leq n-1$, and we have
    \begin{align*}
        \sum_{i\in[r-1]} \left(v_i(A_i) - v_i(B_i)\right) &= \sum_{i\in[r-1]\cap R} \left(v_i(A_i) - v_i(B_i)\right) + \sum_{i\in[r-1]\cap S} \left(v_i(A_i) - v_i(B_i)\right) &\\
        &\leq \sum_{i\in[r-1]\cap R} \left(v_i(A_i) - v_i(B_i)\right) &\\
        &\leq \sum_{i\in R} \left(v_i(A_i) - v_i(B_i)\right) &\\
        &\leq \sum_{i\in S} \left(v_i(B_i) - v_i(A_i)\right) &\\
        &\leq n-1.
    \end{align*}
    The second to last inequality says that the total decrease in value for agents in $R$ is at most the total increase in value for agents in $S$ (since $B$ is an optimal redistribution of the bundles). The final inequality follows from the fact that $|S| \leq n-1$ and for each $i\in S$, $v_i(B_i) - v_i(A_i) \leq 1$ since $\mathcal{A}$ is EF1. \qedhere
\end{itemize}
\end{proof}

Together, Lemmas~\ref{lem:envy-cycles} and~\ref{lem:redistribute} bound the total subsidy 
sufficient for envy-freeness when the valuation functions are monotone.
\newtheorem*{thm:monotone}{Theorem \ref{thm:monotone}}
\begin{thm:monotone}
For monotonic valuations there is 
an envy-freeable allocation where the subsidy to each agent is at most $2(n-1)$ dollars. (Given a valuation oracle, 
this allocation can be computed in polynomial time.) \qed
\end{thm:monotone}

\bibliography{res}
\end{document}